\newtheorem{theorem}{Theorem}
\newcommand{\one}{\mathbbm{1}}
\newcommand{\R}{\mathbbm{R}}
\newcommand{\E}{\mathbbm{E}}
\newcommand{\PP}{\mathbbm{P}}
\title{Assessing the calibration of multivariate probabilistic forecasts}
\author{Sam Allen \qquad Johanna Ziegel \qquad David Ginsbourger \\[1ex]{\normalsize Institute of Mathematical Statistics and Actuarial Science}\\ {\normalsize University of Bern}\\{\normalsize Bern, Switzerland}\\ {\normalsize\url{{sam.allen,johanna.ziegel,david.ginsbourger}@stat.unibe.ch}}}
\date{}
\begin{document}
	
\maketitle

\begin{abstract}     
	Rank and PIT histograms are established tools to assess the calibration of probabilistic forecasts. They not only check whether an ensemble forecast is calibrated, but they also reveal what systematic biases (if any) are present in the forecasts. Several extensions of rank histograms have been proposed to evaluate the calibration of probabilistic forecasts for multivariate outcomes. These extensions introduce a so-called pre-rank function that condenses the multivariate forecasts and observations into univariate objects, from which a standard rank histogram can be produced. Existing pre-rank functions typically aim to preserve as much information as possible when condensing the multivariate forecasts and observations into univariate objects. Although this is sensible when conducting statistical tests for multivariate calibration, it can hinder the interpretation of the resulting histograms. In this paper, we demonstrate that there are few restrictions on the choice of pre-rank function, meaning forecasters can choose a pre-rank function depending on what information they want to extract concerning forecast performance. We introduce the concept of simple pre-rank functions, and provide examples that can be used to assess the location, scale, and dependence structure of multivariate probabilistic forecasts, as well as pre-rank functions that could be useful when evaluating probabilistic spatial field forecasts. The simple pre-rank functions that we introduce are easy to interpret, easy to implement, and they deliberately provide complementary information, meaning several pre-rank functions can be employed to achieve a more complete understanding of multivariate forecast performance. We then discuss how e-values can be employed to formally test for multivariate calibration over time. This is demonstrated in an application to wind speed forecasting using the EUPPBench post-processing benchmark data set.
\end{abstract}

\section{Introduction}

It is standard practice for operational weather centres to issue forecasts that are probabilistic. Such forecasts typically take the form of an ensemble of possible weather scenarios, allowing weather centres to quantify the uncertainty inherent in their predictions. However, despite the unequivocal utility of ensemble forecasting, there is no guarantee that the issued ensemble forecasts are reliable, or calibrated, in the sense that they align statistically with the corresponding observations. To determine whether or not forecasts can be trusted, methods are required to analyse forecast calibration.

\bigskip

Although several notions of forecast calibration exist for real-valued outcomes \citep[see e.g.][]{Gneiting2007, Gneiting2021}, it is common in practice to assess whether forecasts are \textit{probabilistically calibrated}. Probabilistic calibration of ensemble predictions can be visualised using rank histograms \citep[also called Talagrand diagrams; ][]{Anderson1996, Talagrand1997, Hamill1997}. Given a set of ensemble forecasts and observations, rank histograms display the ranks of the observations when pooled among the corresponding ensemble members, thereby assessing whether the observations and the ensemble members are exchangeable. These graphical diagnostic tools are useful in practice since they not only assess calibration, but they also reveal what deficiencies (if any) are present in the forecasts. As such, rank histograms have become an integral component of probabilistic weather forecast evaluation.

\bigskip

Several extensions of rank histograms have been proposed to evaluate the calibration of probabilistic forecasts for multivariate outcomes. These multivariate rank histograms introduce a so-called \textit{pre-rank function} that condenses the multivariate forecasts and observations into univariate objects, from which a standard rank histogram can be constructed \citep{Gneiting2008}. Proposed extensions differ in the choice of pre-rank function: \cite{Smith2004} and \cite{Wilks2004} suggested using the minimum spanning tree of the multivariate ensemble members and observation as a pre-rank function; \cite{Gneiting2008} introduced a pre-rank function based on a multivariate ranking of the ensemble members and observation; \cite{Thorarinsdottir2016} proposed two alternative approaches that leverage the average rank of the observation across the individual dimensions; while \cite{Knuppel2022} recently argued that proper scoring rules are designed to summarise the information contained in the multivariate forecast and observation into a single value, and they therefore provide a canonical choice for a pre-rank function when testing for multivariate calibration. 

\bigskip

Several of these choices have been reviewed and compared using simulated forecasts and observations \citep{Thorarinsdottir2018, Wilks2017}. These comparisons illustrate how the interpretation of the multivariate rank histogram depends on the choice of pre-rank function, and that different pre-rank functions are more adept at identifying different types of mis-calibration in the forecasts. The authors of these studies therefore recommend that multiple pre-rank functions are used to construct multivariate rank histograms, to obtain a more complete understanding of how the multivariate forecasts behave.

\bigskip

However, despite the attention they have received in the literature, multivariate rank histograms are relatively rarely employed in practice. For example, weather centres regularly issue ensemble forecast fields over relevant spatial domains, and, although these forecast fields are inherently multivariate, their calibration is rarely assessed beyond evaluating univariate calibration at the individual locations or grid points. In the univariate case, the rank histogram shows the relative position of the observation among the ensemble members. In the multivariate case, existing pre-rank functions generate multivariate rank histograms with different interpretations, and practitioners may be uncertain as to which pre-rank function(s) they should apply when evaluating their forecasts. The pre-rank function is typically designed to preserve as much information as possible when condensing the multivariate forecasts and observations into univariate objects, leading to a multivariate rank histogram with a less intuitive interpretation. In particular, forecast systems with contrasting biases can often result in a similarly-shaped histogram.

\bigskip

In this paper, we argue that the main purpose of rank histograms is to identify the deficiencies in probabilistic forecasts by providing a graphical visualisation of forecast calibration. In order to achieve this effectively in a multivariate context, it is imperative that the pre-rank function is straightforward to interpret. We generalise and apply the arguments of \citet[Section 2.3]{Gneiting2008} and \citet{Ziegel2017} that any function that transforms multivariate vectors to univariate values can be used as a pre-rank function, which facilitates flexible assessments of forecast calibration. The forecaster can choose the pre-rank functions depending on what information they want to extract from their forecasts. For example, if interest is on the dependence structure of the multivariate forecast distribution, then a pre-rank function could be chosen that quantifies this dependence structure; if interest is on extreme events, then the pre-rank function could quantify how extreme the forecast or observation is, and so on. 

\bigskip

We can formally test whether or not a prediction system is calibrated by checking whether its (multivariate) rank histogram is flat. \cite{Wilks2019} recently compared approaches to achieve this using popular measures of histogram flatness. Here, we advocate an alternative approach based on e-values \citep{Arnold2021}, which provide a dynamic alternative to p-values when conducting statistical hypothesis tests. While classical tests require that the evaluation period is fixed in advance, e-values generate statistical tests that are valid sequentially. This makes them particularly relevant in sequential forecasting settings \citep[see also][]{Henzi2022}, allowing us to test forecast calibration sequentially over time without compromising type-I-error guarantees. We additionally discuss appropriate methods to address the problem of multiple testing that arises when several pre-rank functions are used to assess multivariate calibration.

\bigskip

In the following section, we introduce rank histograms, both in a univariate and multivariate setting. Section \ref{section:pre-ranks} discusses existing pre-rank functions that have been proposed in the literature, and introduces possible alternatives that assess particular features of multivariate forecasts. These pre-rank functions are employed in a simulation study in Section \ref{section:simstudy}. Section \ref{section:evalues} outlines how e-values can be employed to sequentially test for forecast calibration, and Section \ref{section:application} presents a case study in which multivariate rank histograms and e-values are used to assess the calibration of gridded wind speed ensemble forecasts over Europe. Section \ref{section:conclusion} concludes. \verb!R! code to implement the proposed multivariate histograms in practice is available at \url{https://github.com/sallen12/MultivCalibration}.

\bigskip

\section{Rank histograms}
\label{section:histograms}

In this paper, we restrict attention to ensemble forecasts, since multivariate weather forecasts are almost exclusively in this form. However, the proposed framework readily applies to continuous forecast distributions, and this is treated in detail in the Appendix. An ensemble forecast with $M$ members is a collection of possible scenarios $\mathbf{x}_{1}, \dots, \mathbf{x}_{M} \in \R^d$ for the future outcome $\mathbf{y} = \mathbf{x}_0 \in \R^d$. An ensemble forecast can be interpreted as a multivariate probabilistic forecast by considering the empirical distribution of $\mathbf{x}_{1}, \dots, \mathbf{x}_{M}$ as the predictive distribution. 

\bigskip

To assess the calibration of ensemble forecasts for real-valued outcomes ($d = 1$), we can record the rank of each observation $\mathbf{y} \in \R$ among the corresponding ensemble members $\mathbf{x}_{1}, \dots, \mathbf{x}_{M} \in \R$ for a large number of forecast cases, and check whether all ranks occur with the same frequency (up to sampling variation). This is typically achieved by displaying the distribution of the ranks in a histogram \citep{Anderson1996,Talagrand1997,Hamill1997}. Formally, the (randomised) rank of a real number $z_{0} \in \R$ amongst $z_{0}, \dots , z_{M} \in \R$ is defined as
\[
\mathrm{rank}(z_{0}; z_{1}, \dots, z_{M}) = 1 + \sum_{i=1}^{M} \one\{z_{i} < z_{0}\} + W \in \{1, \dots, M + 1\},
\]
where $\one$ denotes the indicator function, and $W$ is zero if $N = \#\{i = 1, \dots, m \mid z_{0} = z_{i}\}$ is zero, and is uniformly distributed on $\{1, \dots, N \}$ otherwise.

\bigskip

In the univariate case, an ensemble forecast is probabilistically calibrated if its rank histogram is flat. If the rank histogram is not flat, then its shape often provides additional information regarding how the forecasts are mis-calibrated: a $\cup$-shaped histogram suggests the observations are frequently either above or below all ensemble members, implying the forecasts are under-dispersed; a $\cap$-shaped histogram implies that the forecasts are over-dispersed; and a triangular histogram suggests that the forecasts tend to either over- or under-predict the outcome, indicative of a systematic forecast bias. Due to this straightforward interpretation of the rank histogram's shape, they are now well-established when evaluating operational weather forecasts. 

\bigskip

Several attempts have been made to emulate this behaviour when assessing the calibration of multivariate ensemble forecasts. Since the notion of a rank is not well-defined on $\R^{d}$, it is customary to introduce a so-called pre-rank function 
\[
\rho: \R^{d}\times \underbrace{\R^d \times \dots\times \R^d}_{\text{$M$ times}} \to \R
\]
that is invariant under permutations of the last $M$-arguments. The pre-rank function converts the multivariate observations and ensemble members to univariate objects, from which a standard rank histogram can be constructed \citep{Gneiting2008}. That is, the calibration of the multivariate ensemble forecast $\mathbf{x}_{1}, \dots, \mathbf{x}_{M}$ can be assessed by considering the univariate rank of the transformed observation $\rho(\mathbf{y},\mathbf{x}_{1}, \dots, \mathbf{x}_{M})$ within the transformed ensemble members $\rho(\mathbf{x}_{1},\mathbf{y},\mathbf{x}_{2}, \dots,\mathbf{x}_M), \dots, \rho(\mathbf{x}_{M}, \mathbf{y}, \mathbf{x}_{1}, \dots, \mathbf{x}_{M-1})$. A multivariate rank histogram is obtained by repeating this for a large number of forecast cases, and displaying the relative frequency that $\rho(\mathbf{y},\mathbf{x}_{1}, \dots, \mathbf{x}_{M})$ assumes each possible rank. If the ensemble members and the observation are exchangeable, the resulting rank histogram is uniform, and we say that the multivariate ensemble forecasts are calibrated with respect to the pre-rank function $\rho$. 

\bigskip

The simplest way to construct an admissible pre-rank function is to choose $\rho$ such that it does not depend on the last $M$ arguments. We term such pre-rank functions \textit{simple} pre-rank functions, and omit the last $M$ arguments in their definition
\[
\rho : \R^{d} \to \R.
\]
Simple pre-rank functions therefore transform the multivariate forecasts and observations to univariate summary statistics. The calibration of multivariate ensemble forecasts can analogously be assessed by ranking the transformed observation $\rho(\mathbf{y})$ within the transformed ensemble members $\rho(\mathbf{x}_{1}), \dots, \rho(\mathbf{x}_{M})$, and displaying the resulting ranks within a histogram. This approach can also readily be adapted to assess the calibration of forecasts on an arbitrary outcome space $\Omega$ by using a pre-rank function $\rho: \Omega \to \R$. While most pre-rank functions introduced in the literature are not simple, we argue that simple pre-rank functions are often more intuitive in practice, since they can easily be designed to focus evaluation on specific aspects of the multivariate forecasts. 

\bigskip

\section{Pre-rank functions}\label{section:pre-ranks}

\subsection{Existing pre-rank functions}\label{section:existing_pre-ranks}

To construct a canonical extension of the univariate rank histogram, \cite{Gneiting2008} introduced a multivariate rank as a pre-rank function:
\begin{equation*}
	\rho_{mv}(\mathbf{x}_0,\mathbf{x}_{1},\dots,\mathbf{x}_{M}) = \sum_{m=0}^{M} \one\{ \mathbf{x}_{m} \preceq \mathbf{x}_{0} \},
\end{equation*}
where $\mathbf{x}_{m} \preceq \mathbf{x}_{0}$ signifies that $x_{m,j} \leq x_{0,j}$ for all $j = 1, \dots, d$ with $\mathbf{x}_{m} = (x_{m,1}, \dots, x_{m,d}) \in \R^{d}$ for $m = 0, \dots, M$. For example, a pre-rank of $M+1$ for $\mathbf{y} = \mathbf{x}_{0}$ corresponds to a vector that exceeds the elements of the ensemble $\mathbf{x_1},\dots,\mathbf{x}_{M}$ in all dimensions, whereas a low pre-rank suggests that $\mathbf{y} = \mathbf{x}_{0}$ is not larger than the ensemble members in all dimensions. In high dimensions, it is often the case that few elements of $\mathbf{x_0},\dots,\mathbf{x}_{M}$ are comparable with respect to $\preceq$, resulting in most elements receiving the same pre-rank. Randomisation of these pre-ranks then trivially yields a flat rank histogram.

\bigskip

Recognising this, \cite{Thorarinsdottir2016} introduced two alternative pre-rank functions that are more robust to the dimensionality. The average rank pre-rank function is defined as
\begin{equation*}
	\rho_{av}(\mathbf{x}_0,\mathbf{x}_1,\dots,\mathbf{x}_M) = \frac{1}{d} \sum_{j=1}^{d} \mathrm{rank}(x_{0,j};x_{1,j},\dots,x_{M,j}).
\end{equation*}
The interpretation of the average rank is similar to the multivariate rank, but it typically leads to fewer ties between elements of $S$ and therefore requires less randomisation, making it more applicable in higher dimensions. \cite{Thorarinsdottir2016} also proposed the band-depth pre-rank function:
\begin{equation*}
	\rho_{bd}(\mathbf{x}_0,\mathbf{x}_1,\dots,\mathbf{x}_M) = \frac{1}{d} \sum_{j=1}^{d} \left[ M + 1 - \mathrm{rank}(x_{0,j};x_{1,j},\dots,x_{M,j}) \right] \left[ \mathrm{rank}(x_{0,j};x_{1,j},\dots,x_{M,j}) - 1 \right]. 
\end{equation*}
This representation assumes that there are no ties between $x_{0,j}, x_{1,j}, \dots, x_{M,j}$ for $j = 1, \dots, d$, though a more general formula exists for when this assumption does not hold \citep{Thorarinsdottir2016}. The band-depth pre-rank function measures the centrality of $\mathbf{x}_{0}$ among $\mathbf{x}_{0}, \mathbf{x}_{1}, \dots, \mathbf{x}_{M}$, with a large pre-rank suggesting that $\mathbf{x}_{0}$ is close to the centre, and a small pre-rank indicating that $\mathbf{x}_{0}$ is an outlying scenario. 

\bigskip

\cite{Smith2004} and \cite{Wilks2004} alternatively proposed using the inverse length of the minimum spanning tree of the set $\mathbf{x}_{1}, \dots, \mathbf{x}_{M}$ as a pre-rank function for $\mathbf{x}_{0}$. This pre-rank function also measures the centrality of $\mathbf{x}_0$, resulting in multivariate rank histograms with a similar interpretation to those generated using the band-depth pre-rank function. For concision, this minimum spanning tree approach is omitted from the applications in the following sections.

\bigskip

Finally, \cite{Knuppel2022} recently introduced pre-rank functions based on proper scoring rules. Proper scoring rules are functions that take a probabilistic forecast and an observation as inputs, and output a single numerical value that quantifies the forecast accuracy, thereby allowing competing forecast systems to be ranked and compared \citep[e.g.][]{GneitingRaftery2007}. \cite{Knuppel2022} argue that, by design, proper scoring rules condense the information contained in the forecasts and observations into a single value, and they therefore provide an appealing choice of pre-rank function when assessing multivariate forecast calibration. For example, a pre-rank function can be derived using the energy score, arguably the most popular scoring rule when evaluating probabilistic forecasts for multivariate outcomes \citep{GneitingRaftery2007}:
\begin{equation}\label{eq:es_pr}
	\rho_{es}(\mathbf{x}_{0}, \mathbf{x}_{1}, \dots, \mathbf{x}_{M}) = \frac{1}{M} \sum_{m=1}^{M} \| \mathbf{x}_{m} - \mathbf{x}_{0} \| - \frac{1}{2 M^{2}} \sum_{m=1}^{M} \sum_{k=1}^{M} \| \mathbf{x}_{m} - \mathbf{x}_{k} \|,
\end{equation}
where $\| \cdot \|$ denotes the Euclidean distance in $\R^{d}$. This pre-rank function measures the distance between $\mathbf{x}_{0}$ and the ensemble members $\mathbf{x}_{1}, \dots, \mathbf{x}_{M}$. A low pre-rank therefore indicates that $\mathbf{x}_{0}$ is similar to the ensemble members, whereas outlying values will receive higher pre-ranks. As \cite{Knuppel2022} remark, the latter term in this pre-rank function does not depend on $\mathbf{x}_{0}$, and could therefore be removed without changing the resulting ranks. Alternative multivariate scoring rules could also readily be used in place of the energy score.

\bigskip

\subsection{Generic pre-rank functions}\label{section:new_pre-ranks}

The pre-rank functions listed above all depend non-trivially on both $\mathbf{x}_{0}$ and $\mathbf{x}_{1}, \dots, \mathbf{x}_{M}$. That is, the function that condenses the multivariate observations and ensemble members into univariate objects depends itself on these observations and forecasts (hence the term ``pre-rank''). This is not problematic if the pre-rank functions are invariant to permutations of $\mathbf{x}_{0}, \mathbf{x}_{1}, \dots, \mathbf{x}_{M}$. In this section, we argue that it is often more intuitive to employ simple pre-rank functions that depend only on $\mathbf{x}_{0}$. We demonstrate how this approach allows us to target particular aspects of the multivariate forecasts when assessing calibration.

\bigskip

As is standard when evaluating multivariate forecasts, many of the pre-rank functions discussed herein require that the different dimensions are on the same scale. If this cannot be assumed, then the forecasts and observations should be standardised prior to evaluation. This standardisation becomes part of the pre-rank function, so it should either depend entirely on past data or be permutation invariant. For example, one can use the past climatological mean and standard deviation along each dimension, or the mean and standard deviation of the observation and ensemble members. The interpretation of the multivariate rank histograms will generally change depending on whether one has standardised or not, since dimensions on a larger scale will typically have more influence on the results; we are then checking calibration with respect to a different pre-rank function. 

\bigskip

If the goal is to visualise the (mis-)calibration of the multivariate forecasts, then the most important aspect of the pre-rank function is that it leads to multivariate rank histograms that are interpretable. Forecasters should therefore employ pre-rank functions that are capable of extracting the most relevant information from their forecasts. This does not rule out the pre-rank functions introduced above, but it highlights that practitioners need not limit themselves to these approaches when assessing multivariate calibration in practice.

\bigskip

On the other hand, if the goal is to conduct a formal statistical test for multivariate calibration, in the sense that the ensemble members and the observation are exchangeable, interpretability of pre-rank functions may not be the most important aspect. We do not consider tests for a global null hypothesis of multivariate calibration in this paper, but we comment on how such tests can be constructed in principle in Section \ref{section:evalues} based on several pre-rank functions and e-values.

\bigskip

When evaluating multivariate forecasts, it is common to focus on the location, the scale, and the dependence structure of the forecasts. Hence, rather than choosing one pre-rank function that simultaneously tries to assimilate these different elements, we propose separate pre-rank functions that assess each aspect individually; it has repeatedly been acknowledged that several pre-rank functions should be used to obtain a more complete understanding of multivariate forecast performance, so it makes sense that these pre-rank functions focus on distinct features of the multivariate forecasts. 

\bigskip

For example, a simple pre-rank function to assess the forecast's location might take the mean of the $d$ elements in the multivariate vector:
\begin{equation*}
	\rho_{loc}(\mathbf{x}_{0}) = \bar{\mathbf{x}}_{0} = \frac{1}{d} \sum_{j=1}^{d} x_{0,j}.
\end{equation*}
This pre-rank function can readily be applied to the observed outcome and each ensemble member, and the rank of the transformed observation can then be calculated. This pre-rank is easier to calculate than the existing approaches listed above. The interpretation of the resulting rank histogram is simple: if $\rho_{loc}(\mathbf{y})$ frequently attains a high (low) rank among the pre-ranks of the corresponding ensemble members, then the multivariate histogram will appear triangular, suggesting that the multivariate forecasts are negatively (positively) biased when predicting the mean, or location, of the observed vector. As in the univariate case, a $\cup$-shaped ($\cap$-shaped) histogram suggests that the ensemble forecasts are under-dispersed (over-dispersed) when predicting the observed mean.

\bigskip

Similarly, if we want to assess how well our probabilistic forecasts can predict the scale, or dispersion of the multivariate observations, we could take the variance or standard deviation of the $d$ elements as a simple pre-rank function:
\begin{equation*}
	\rho_{sc}(\mathbf{x}_{0}) = s_{\mathbf{x}_{0}}^{2} = \frac{1}{d} \sum_{j=1}^{d} \left( x_{0,j} - \bar{\mathbf{x}}_{0} \right) ^{2}.
\end{equation*}
This is valid even if the number of dimensions is small: we are not using this to estimate some unknown population variance, but rather as a simple measure of dispersion in the multivariate vector $\mathbf{x}_{0}$. The interpretation of the resulting multivariate rank histogram is analogous to above, but with the spread of the forecasts over the multivariate domain as the target variable, rather than the mean.

\bigskip

Assessing the dependence structure is less trivial, since most measures of dependence, such as correlation coefficients, are estimated from a time series of multivariate observations. The pre-rank function, on the other hand, should take only a single realisation $\mathbf{x}_{0} \in \R^{d}$ as an input. However, tools do exist to quantify the dependence in multivariate vectors. For example, the variogram at a chosen lag $h \in \{1, \dots, d - 1\}$ measures the variation between dimensions separated by lag $h$, with a small value suggesting strong dependence between these dimensions. A variogram-based pre-rank function can therefore be derived to measure the dependence between different dimensions, such as 
\begin{equation}\label{eq:variogram}
	\rho_{dep}(\mathbf{x}_{0}; h) = -\frac{\gamma_{\mathbf{x}_{0}}(h)}{s_{\mathbf{x}_{0}}^{2}},
\end{equation}
where 
\begin{equation*}
	\gamma_{\mathbf{x}_{0}}(h) = \frac{1}{2 (d - h)} \sum_{j = 1}^{d - h} |x_{0,j} - x_{0,j+h}|^{2}
\end{equation*}
is an empirical variogram at lag $h$ that only requires knowledge of the multivariate vector $\mathbf{x}_{0}$. The negative sign ensures that a larger value of $\rho_{dep}(\mathbf{x}_{0}; h)$ indicates a larger dependence, in keeping with the interpretation of $\rho_{loc}$ and $\rho_{sc}$ above. Hence, if $\rho_{dep}(\mathbf{y}; h)$ is consistently large compared to $\rho_{dep}(\mathbf{x}_{1}; h), \dots, \rho_{dep}(\mathbf{x}_{M}; h)$, then the ensemble members under-estimate the dependencies between dimensions separated by lag $h$, whereas the opposite suggests that the ensemble forecasts over-estimate the dependence at lag $h$. By measuring dependence using the empirical variogram, we implicitly assume that the lag between dimensions is meaningful, such as in a time series setting. This can be generalised further using adapted notions of spatial distances and bins \citep[e.g.][]{Gaetan2010}.

\bigskip

This pre-rank function depends on the choice of a lag $h$ at which to calculate the empirical variogram. \cite{Scheuerer2015} construct a variogram-based proper scoring rule based on the quantity
\[
\sum_{i=1}^{d} \sum_{j=1}^{d} w_{i, j} |x_{0,i} - x_{0,j}|^{2},
\]
for nonnegative weights $w_{i,j}$, arguing that it provides an effective means to evaluate a multivariate forecast's dependence structure. This quantity could also be used as a dependence pre-rank function, circumventing the choice of a specific lag $h$. However, it would require selecting a matrix of nonnegative weights. If these weights are all the same, then this weighted combination will be proportional to the variance used to define $\rho_{sc}$ above. This reflects the intrinsic difficulty in distinguishing between errors in the forecast spread and in the forecast dependence structure from only a single multivariate observation: if variables are highly dependent on each other, then the spread of the vector should be small. Dividing the empirical variogram by $s_{\mathbf{x}_{0}}^{2}$ means this dependence pre-rank function focuses more on the dependence structure and is less sensitive to errors in the scale. 

\bigskip

\subsection{Pre-rank functions for spatial fields}\label{section:spatial_pre-ranks}

The three pre-rank functions in the previous section are simple examples that condense the multivariate forecasts and observations into univariate summary statistics. Alternative pre-rank functions can also be employed in practice and we illustrate this by considering spatial field forecasts. Here, a spatial field is an element $\mathbf{x}$ in a domain $\R^{p \times q}$, where each point on the $p \times q$ grid represents a separate dimension; for ease of notation, we write $\mathbf{x}$ in place of $\mathbf{x}_{0}$ in this section. The location and scale pre-rank functions can readily be applied to spatial fields by `unravelling' $\mathbf{x}$ to obtain a vector of length $d = p \times q$. To employ the dependence pre-rank function, however, we must use an empirical variogram that is a function of spatial lags (see below). We discuss additional pre-rank functions that are tailored to the evaluation of spatial field forecasts. 

\bigskip

One such example was introduced recently by \cite{Scheuerer2018}, who used the fraction of threshold exceedances (FTE) as a pre-rank function when assessing probabilistic precipitation fields. The FTE measures the proportion of values in the spatial field that exceed some threshold of interest, and the corresponding pre-rank function is
\begin{equation}\label{eq:FTE}
	\rho_{FTE}(\mathbf{x}; t) = \frac{1}{p q} \sum_{i=1}^{p} \sum_{j=1}^{q} \one \{ x_{i,j} > t \},
\end{equation}
for some $t \in \R$. The FTE is well-known in the context of spatial forecast evaluation, since it forms the basis of the fraction skill score \citep{Roberts2008}. By employing it as a pre-rank function, we can assess to what extent the forecasts reliably predict in how many dimensions the threshold $t$ will be exceeded, giving us an idea of forecast calibration when predicting more extreme outcomes. If all components of all ensemble members and the outcome are below $t$, which will often be the case for very large thresholds, then the corresponding pre-ranks will be all be zero and the resulting rank of the observation will be determined at random. As discussed by \cite{Scheuerer2018}, these pre-ranks contain no information, meaning these instances can be omitted from the rank histogram without changing its interpretation. The FTE pre-rank function can also be applied to multivariate vectors $\mathbf{x} \in \R^{d}$, by replacing the double summation in Equation \ref{eq:FTE} with a summation over $j$ from $1$ to $d$.

\bigskip

Alternative, well-established geostatistical principles could similarly be leveraged within this framework to assess other characteristics of multivariate forecast distributions. For example, while the empirical variogram captures the variation between elements of the multivariate vector that are separated by a given lag, another aspect to consider is whether this variation changes depending on the direction between the elements. That is, whether or not the variogram is isotropic. A variogram is said to be isotropic if it depends only on the distance between elements of the multivariate vector, and not on the direction between them. By introducing a pre-rank function that measures the isotropy of an empirical variogram, we can assess to what extent the multivariate ensemble forecasts reproduce the degree of (an)isotropy present in the observed outcomes.

\bigskip

Several statistical tests have been proposed to assess whether or not the assumption of isotropy is valid when modelling spatial fields \citep[see][for a review]{Weller2016}. Here, we propose a pre-rank function inspired by the test statistic discussed in \cite{Guan2004}. Let $\mathcal{I} = \{1, \dots, p\} \times \{1, \dots, q \}$ represent a set of grid points, and define the empirical variogram of a field $\mathbf{x} \in \R^{p \times q}$ at multivariate lag $\mathbf{h} \in \{0, \dots, p - 1\} \times \{0, \dots, q - 1\}$ as
\begin{equation}
	\gamma_{\mathbf{x}}(\mathbf{h}) = \frac{1}{2|\mathcal{I}(\mathbf{h})|} \sum_{\mathbf{j} \in \mathcal{I}(\mathbf{h})} |x_{\mathbf{j}} - x_{\mathbf{j}+\mathbf{h}}|^{2},
\end{equation}
where $\mathcal{I}(\mathbf{h}) = \{\mathbf{j} \in \mathcal{I} : \mathbf{j} + \mathbf{h} \in \mathcal{I} \}$, meaning the sum is over all dimensions (i.e. grid points) that are separated by the multivariate vector $\mathbf{h}$. Equation \ref{eq:variogram} can readily be adapted to use this spatial variogram.

\bigskip

The isotropy pre-rank function considered here is then 
\begin{equation}\label{eq:isotropy}
	\rho_{iso}(\mathbf{x}; h) = - \left\{ \left[ \frac{\gamma_{\mathbf{x}}((h, 0)) - \gamma_{\mathbf{x}}((0, h))}{\gamma_{\mathbf{x}}((h, 0)) + \gamma_{\mathbf{x}}((0, h))}  \right]^{2} + \left[ \frac{\gamma_{\mathbf{x}}((h, h)) - \gamma_{\mathbf{x}}((-h, h))}{\gamma_{\mathbf{x}}((h, h)) + \gamma_{\mathbf{x}}((-h, h))}  \right]^{2} \right\}.
\end{equation}
This pre-rank function quantifies the squared distance between the variogram in the horizontal direction $\mathbf{h} = (h, 0)$ and the vertical direction $\mathbf{h} = (0, h)$, plus the squared distance between the variogram in the two diagonal directions $\mathbf{h} = (h, h)$ and $\mathbf{h} = (-h, h)$. This thus considers two separate orthogonal directions, and determines how much the variogram changes between these directions. Again, the minus sign just ensures that a higher pre-rank indicates a higher measure of isotropy. A pre-rank close to zero therefore suggests that the variogram does not change between the pairs of directions, whereas a lower value of this pre-rank function suggests that the variogram depends on the direction between elements of the field, indicative of anisotropy. 

\bigskip

The denominators in Equation \ref{eq:isotropy} help to standardise the variogram differences, recognising that the variogram will typically be larger as the lag increases. In statistical tests, this standardisation is often performed using estimates of the covariance matrix of the variogram at different lags, typically obtained using expensive resampling methods. This standardisation ensures that the squared difference between variograms at different lags asymptotically follows a multivariate normal distribution, facilitating the introduction of a chi-squared test statistic with which to test for isotropy \citep{Guan2004}. However, this is not required when defining a pre-rank function, and hence we choose an alternative standardisation that is interpretable and straightforward to implement in practice. 

\bigskip

It is most common to use unit lags (i.e. $h = 1$) within tests for isotropy \citep{Weller2016}, though alternative pairs of lags could also be employed in Equation \ref{eq:isotropy}. It is straightforward to extend this pre-rank function so that it considers multiple lags simultaneously, for example by summing $\rho_{iso}(\mathbf{x}; h)$ over different values of $h$. As with the variogram-based pre-rank function in Equation \ref{eq:variogram}, this avoids the selection of a single lag, but typically introduces additional hyperparameters. Several alternative summary measures also exist to describe the isotropy, such as the anisotropy ratio \citep[Section 2.5.2]{Chiles2012}. These measures are typically more complicated to estimate than Equation \ref{eq:isotropy}, often requiring some assumptions about the underlying data generating process, though they may also provide informative pre-rank functions. 

\bigskip

\section{Simulation study}
\label{section:simstudy}

\subsection{Multivariate Gaussian}\label{section:simstudy_mvn}

We revisit the simulation study of \cite{Thorarinsdottir2018} to illustrate how the multivariate rank histograms behave when there are errors in the location, scale, and correlation structure of the forecast distributions. Suppose the observations are drawn from a multivariate normal distribution with mean vector $\mathbf{\mu} = \mathbf{0}$, and covariance matrix $\Sigma$ for which
\begin{equation*}
	\Sigma_{i, j} = \sigma^{2} \hspace{0.05cm} \exp\left( - \frac{|i - j|}{\tau} \right), \quad i, j = 1, \dots, d.
\end{equation*}
The parameter $\sigma^{2} > 0$ controls the variance of the observations along each dimension, while $\tau > 0$ determines how quickly the correlation decays as the distance between the dimensions increases. In this sense, there is assumed to be an ordering of the variables, as is typically the case in a time series or spatial setting. We set $d = 10$, $\sigma^{2} = 1$, and $\tau = 1$. Analogous conclusions are also drawn from other configurations.

\bigskip

For each observation, $M = 20$ ensemble members are drawn at random from a mis-specified multivariate normal distribution. We consider six possible mis-specifications, corresponding to under- and over-estimation of the mean vector $\mathbf{\mu}$, scale parameter $\sigma^{2}$, and correlation parameter $\tau$. The parameter configurations corresponding to these six scenarios are listed in Table \ref{tab:mvn_ss_scens}. Corresponding multivariate rank histograms are displayed in Figure \ref{fig:mvn_ss_hists}.  

\begin{table}
	\centering
	\begin{tabular}{ l | c c c  }
		Description & $\mathbf{\mu}$ & $\sigma^{2}$ & $\tau$ \\
		\hline
		Under-estimation of mean & (-0.5, -0.5, -0.5) & 1 & 1 \\
		Over-estimation of mean & (0.5, 0.5, 0.5) & 1 & 1 \\
		Under-estimation of variance & $\mathbf{0}$ & 0.85 & 1 \\
		Over-estimation of variance & $\mathbf{0}$ & 1.25 & 1 \\
		Under-estimation of correlation & $\mathbf{0}$ & 1 & 0.5 \\
		Over-estimation of correlation & $\mathbf{0}$ & 1 & 2
	\end{tabular}
	\caption{Mis-specifications considered in the simulated ensemble forecasts.} 
	\label{tab:mvn_ss_scens}
\end{table}

\begin{figure}[t!]
	\centering
	\begin{subfigure}{0.16\linewidth}
		\includegraphics[width=\linewidth]{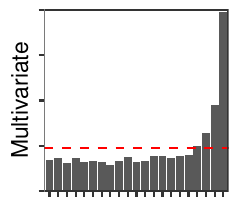}
		\includegraphics[width=\linewidth]{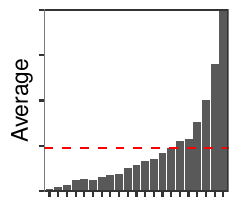}
		\includegraphics[width=\linewidth]{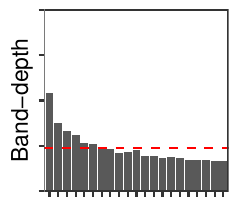}
		\includegraphics[width=\linewidth]{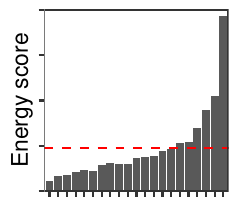}
		\includegraphics[width=\linewidth]{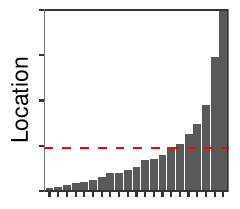}
		\includegraphics[width=\linewidth]{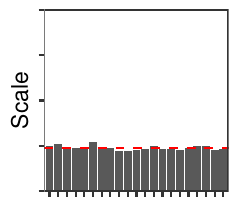}
		\includegraphics[width=\linewidth]{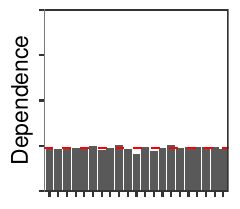}
		\caption{}
	\end{subfigure}
	\begin{subfigure}{0.16\linewidth}
		\includegraphics[width=\linewidth]{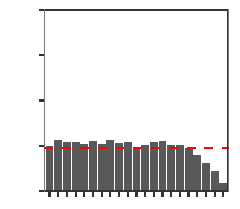}
		\includegraphics[width=\linewidth]{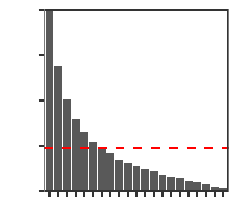}
		\includegraphics[width=\linewidth]{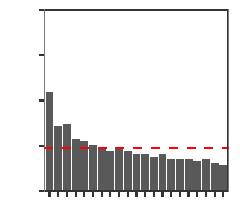}
		\includegraphics[width=\linewidth]{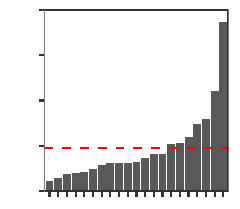}
		\includegraphics[width=\linewidth]{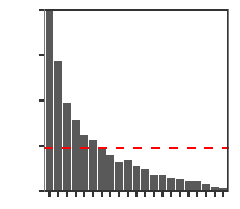}
		\includegraphics[width=\linewidth]{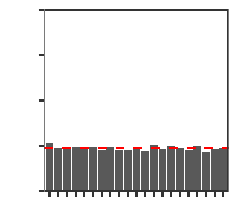}
		\includegraphics[width=\linewidth]{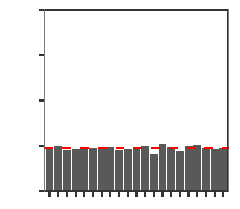}   
		\caption{}
	\end{subfigure}
	\begin{subfigure}{0.16\linewidth}
		\includegraphics[width=\linewidth]{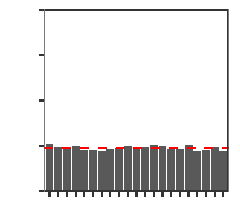}
		\includegraphics[width=\linewidth]{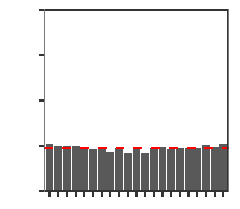}
		\includegraphics[width=\linewidth]{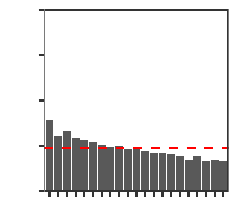}
		\includegraphics[width=\linewidth]{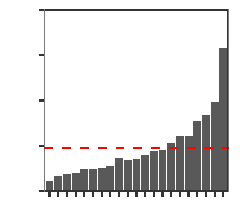}
		\includegraphics[width=\linewidth]{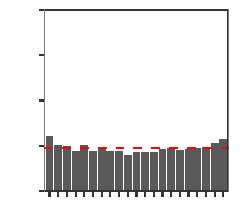}
		\includegraphics[width=\linewidth]{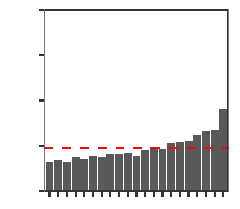}
		\includegraphics[width=\linewidth]{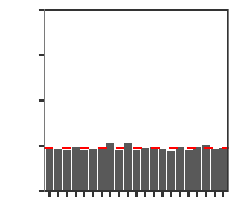}
		\caption{}
	\end{subfigure}
	\begin{subfigure}{0.16\linewidth}
		\includegraphics[width=\linewidth]{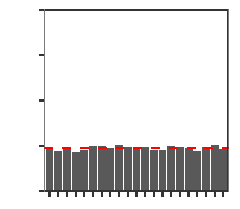}
		\includegraphics[width=\linewidth]{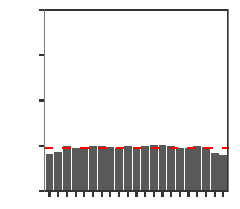}
		\includegraphics[width=\linewidth]{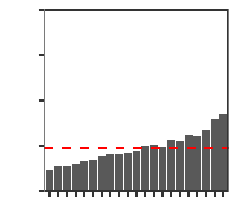}
		\includegraphics[width=\linewidth]{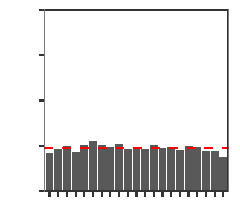}
		\includegraphics[width=\linewidth]{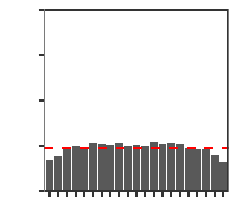}
		\includegraphics[width=\linewidth]{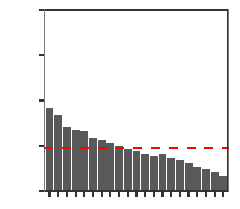}
		\includegraphics[width=\linewidth]{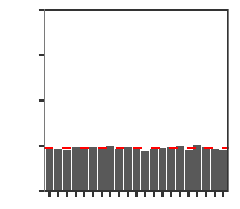}
		\caption{}
	\end{subfigure}
	\begin{subfigure}{0.16\linewidth}
		\includegraphics[width=\linewidth]{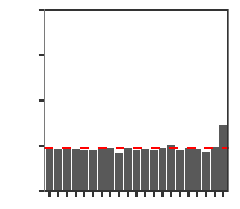}
		\includegraphics[width=\linewidth]{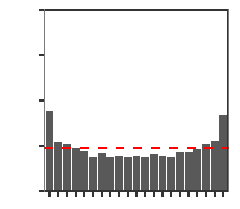}
		\includegraphics[width=\linewidth]{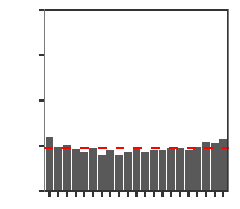}
		\includegraphics[width=\linewidth]{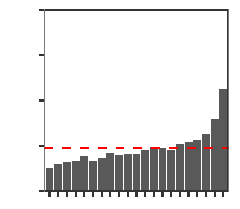}
		\includegraphics[width=\linewidth]{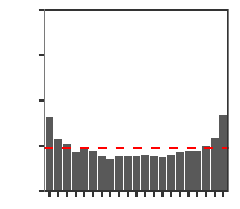}
		\includegraphics[width=\linewidth]{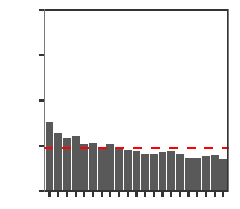}
		\includegraphics[width=\linewidth]{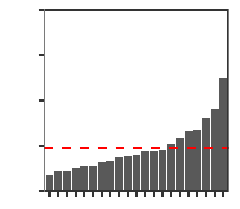}
		\caption{}
	\end{subfigure}
	\begin{subfigure}{0.16\linewidth}
		\includegraphics[width=\linewidth]{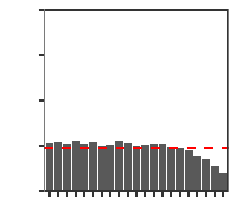}
		\includegraphics[width=\linewidth]{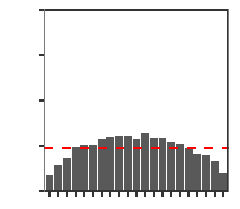}
		\includegraphics[width=\linewidth]{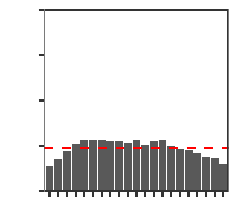}
		\includegraphics[width=\linewidth]{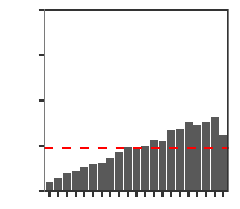}
		\includegraphics[width=\linewidth]{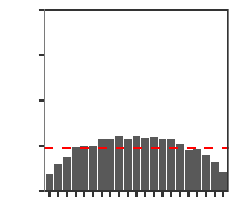}
		\includegraphics[width=\linewidth]{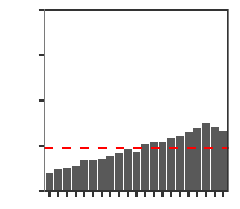}
		\includegraphics[width=\linewidth]{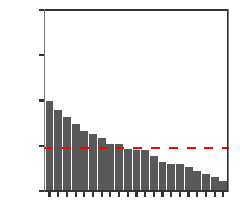}
		\caption{}
	\end{subfigure}
	\caption{Results from the Multivariate Gaussian simulation study. Multivariate rank histograms constructed using the seven pre-rank functions discussed in the text when the mean is (a) under-estimated: $\mu = (-0.5, -0.5, -0.5)$, (b) over-estimated: $\mu = (0.5, 0.5, 0.5)$; when the variance is (c) under-estimated: $\sigma^{2} = 0.85$, (d) over-estimated: $\sigma^{2} = 1.25$; when the correlations are (e) under-estimated: $\tau = 0.5$, (f) over-estimated: $\tau = 2$. The histograms were constructed from 10,000 observations and ensemble forecasts.}
	\label{fig:mvn_ss_hists}
\end{figure}

\bigskip

The multivariate and average pre-rank functions are insensitive to changes in the scale, whereas the band-depth pre-rank function, in measuring the centrality of the observation among the ensemble members, distinguishes well between under- and over-dispersed forecast distributions. On the other hand, when errors are present in the mean of the forecast distributions, the band-depth pre-rank function is unable to differentiate between under-prediction and over-prediction, in contrast to the multivariate and average ranks.

\bigskip

As \cite{Knuppel2022} remark, the energy score pre-rank function also quantifies the centrality of the observation among the ensemble members. However, in almost all cases, the energy score pre-rank function results in a multivariate rank histogram of the same shape, with the observation receiving a higher pre-rank than the ensemble members. In this case, we are comparing the energy score obtained by the ensemble forecast when the observation $\mathbf{y}$ occurs, to the score obtained when each of the ensemble members $\mathbf{x}_{m}$ ($m = 1, \dots, M$) occurs as the observation; it is not too surprising that the score is generally largest when the observation is not a member of the ensemble, resulting in these negatively skewed histograms. Hence, although this pre-rank function leads to powerful tests for multivariate calibration, the corresponding rank histograms are unable to distinguish between different types of forecast errors. This could perhaps be circumvented by treating the observation as an additional ensemble member when calculating Equation \ref{eq:es_pr}, in which case this pre-rank function should behave similarly to the band-depth pre-rank function.

\bigskip

Since none of the existing pre-rank functions can distinguish between all three types of forecast error, \cite{Wilks2017} and \cite{Thorarinsdottir2018} recommend that several pre-rank functions are implemented. This motivates the use of specific pre-rank functions that are designed to focus on each aspect of forecast performance individually. As desired, the location pre-rank function clearly discriminates between errors in the mean of the forecast distribution, the scale pre-rank function between errors in the scale, and the dependence pre-rank function between errors in the correlation of the multivariate forecasts. The dependence pre-rank function is implemented here with lag $h=1$. The scale and dependence pre-rank functions are insensitive to biases in the mean, though the location pre-rank function also slightly detects errors in the scale and dependence structure. As discussed, while the dependence pre-rank function is insensitive to forecast dispersion errors, it is difficult to construct a pre-rank function that can identify errors in the scale of the forecast distribution, without also being insensitive to errors in the correlation structure.

\bigskip

\subsection{Gaussian random fields}\label{section:simstudy_grf}

Consider now the case where the observations are spatial fields. In particular, suppose they are realisations of a Gaussian random field on a regular $30 \times 30$ grid. We assume the grid is standardised such that the distance between adjacent grid points is one unit. This extends the previous example to a higher dimensional setting in which there is additionally spatial structure present in the data. As before, the observations are drawn from a zero-mean random field with an exponential covariance function such that the covariance between two locations $\mathbf{i}$ and $\mathbf{j}$ on the grid is
\begin{equation}\label{eq:cov_func}
	\sigma^{2} \hspace{0.05cm} \exp\left( - \frac{||\mathbf{i} - \mathbf{j}||}{\tau} \right), \quad \mathbf{i}, \mathbf{j} \in \{1, \dots, 30\} \times \{1, \dots, 30\},
\end{equation}
which is governed by a scale parameter $\sigma^{2} = 1$ and correlation parameter $\tau = 1$.

\bigskip

Ensembles of size $M=20$ are similarly drawn from a mis-specified Gaussian random field, and the calibration of these multivariate ensemble forecasts is again assessed under several possible mis-specifications. In this case, results are presented for errors in the scale, correlation, and isotropy of the forecast distributions. The multivariate rank histograms corresponding to biases in the forecast mean are similar to those in Figure \ref{fig:mvn_ss_hists} (not shown). 

\bigskip

The errors in the scale and correlation structure are analogous to those listed in Table \ref{tab:mvn_ss_scens}. The exponential covariance function used to generate the observations depends only on the distance between the different dimensions, and not the direction between them. This observation-generating process is therefore isotropic. To generate forecasts that misrepresent the isotropy in this process, geometric anisotropy is introduced by rescaling the grids in the vertical direction by a factor of 1.25. A second scenario is also considered whereby the observation fields are generated using this approach, with the forecasts obtained using the (isotropic) covariance function in Equation \ref{eq:cov_func}. 

\bigskip

For concision, the multivariate rank and energy score pre-rank functions have been omitted from this higher-dimensional simulation study: as discussed, in high dimensions, the multivariate rank trivially leads to a flat rank histogram, since most pre-ranks are determined at random; the energy score pre-rank function behaves as in the previous simulation study, and fails to distinguish between the different types of forecast errors. Results are also presented for the FTE and isotropy pre-rank functions introduced in Section \ref{section:spatial_pre-ranks}. The FTE pre-rank function is employed with a threshold of $t=1$, roughly equal to the 85th percentile of a standard normal distribution.

\bigskip

These multivariate rank histograms are displayed in Figure \ref{fig:grf_ss_hists}. The average pre-rank function fails to identify errors in the scale and isotropy, whereas the band-depth rank differentiates well between under- and over-estimation of the scale and dependence in the forecast distributions. Unsurprisingly, the location pre-rank function is largely insensitive to any of the forecast errors, since it is designed to focus on the mean of the predictive distributions. The scale pre-rank function is highly sensitive to errors in the variance and dependence structure, while the dependence pre-rank function (at unit multivariate lag) is able to detect errors in the correlation structure. The FTE pre-rank function is insensitive to the scale of the forecast distributions, though it does distinguish between errors in the correlation structure.

\bigskip

Finally, none of these pre-rank functions can identify errors in the isotropy of the forecast distributions. The measure of isotropy given in Equation \ref{eq:isotropy} appears to yield a pre-rank function that is very sensitive to these errors, and is therefore capable of informing the forecaster when the forecast fields misrepresent the isotropy of the true variogram. This is true both when the observed fields are generated from an isotropic process but the forecast fields are not, and also vice versa. This pre-rank function is insensitive to all other types of forecast errors, meaning it focuses uniquely on the isotropy.

\begin{figure}[t!]
	\centering
	\begin{subfigure}{0.16\linewidth}
		\includegraphics[width=\linewidth]{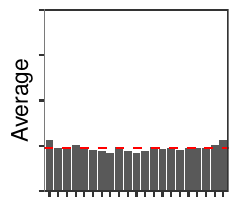}
		\includegraphics[width=\linewidth]{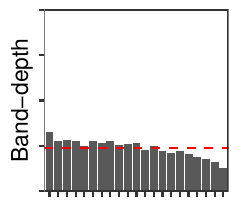}
		\includegraphics[width=\linewidth]{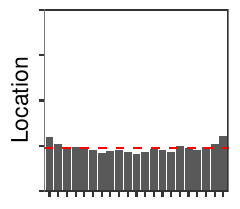}
		\includegraphics[width=\linewidth]{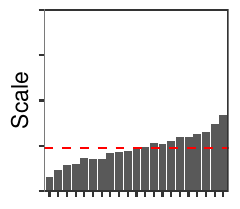}
		\includegraphics[width=\linewidth]{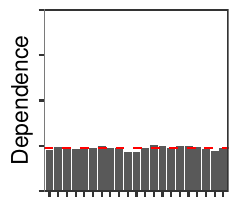}
		\includegraphics[width=\linewidth]{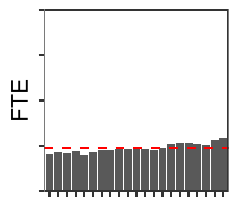}
		\includegraphics[width=\linewidth]{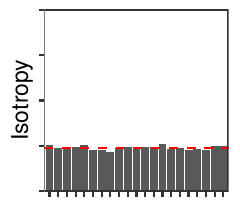}
		\caption{}
	\end{subfigure}
	\begin{subfigure}{0.16\linewidth}
		\includegraphics[width=\linewidth]{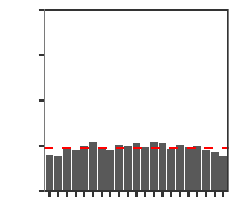}
		\includegraphics[width=\linewidth]{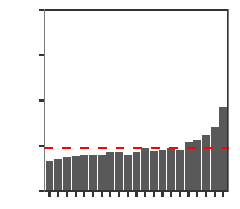}
		\includegraphics[width=\linewidth]{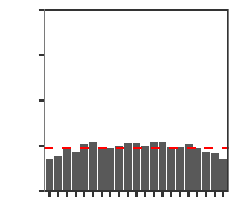}
		\includegraphics[width=\linewidth]{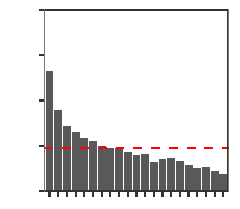}
		\includegraphics[width=\linewidth]{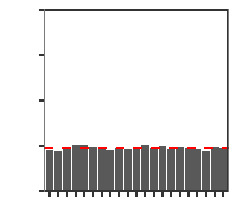}
		\includegraphics[width=\linewidth]{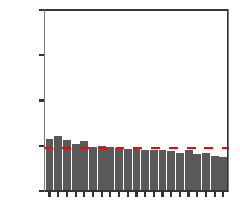}
		\includegraphics[width=\linewidth]{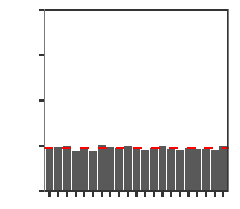}   
		\caption{}
	\end{subfigure}
	\begin{subfigure}{0.16\linewidth}
		\includegraphics[width=\linewidth]{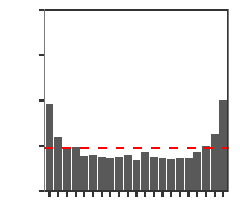}
		\includegraphics[width=\linewidth]{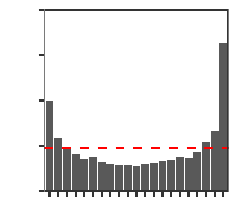}
		\includegraphics[width=\linewidth]{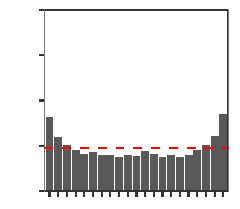}
		\includegraphics[width=\linewidth]{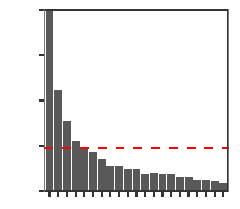}
		\includegraphics[width=\linewidth]{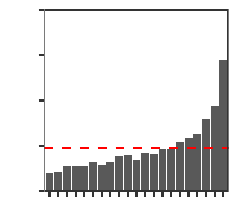}
		\includegraphics[width=\linewidth]{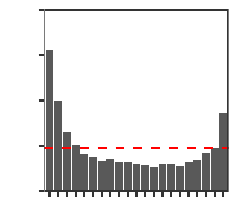}
		\includegraphics[width=\linewidth]{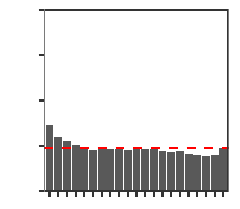}
		\caption{}
	\end{subfigure}
	\begin{subfigure}{0.16\linewidth}
		\includegraphics[width=\linewidth]{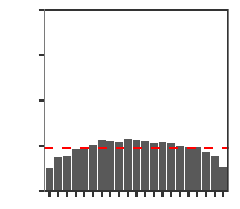}
		\includegraphics[width=\linewidth]{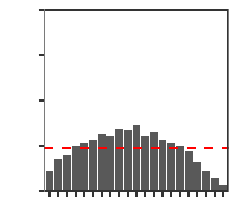}
		\includegraphics[width=\linewidth]{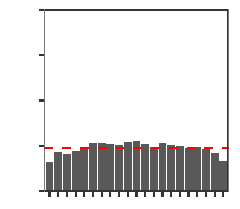}
		\includegraphics[width=\linewidth]{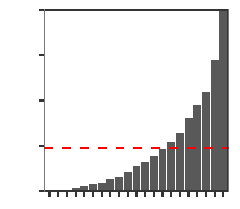}
		\includegraphics[width=\linewidth]{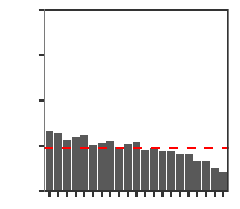}
		\includegraphics[width=\linewidth]{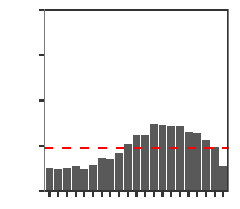}
		\includegraphics[width=\linewidth]{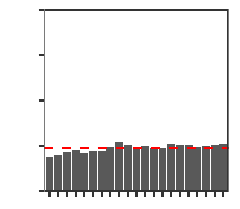}
		\caption{}
	\end{subfigure}
	\begin{subfigure}{0.16\linewidth}
		\includegraphics[width=\linewidth]{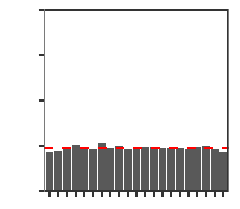}
		\includegraphics[width=\linewidth]{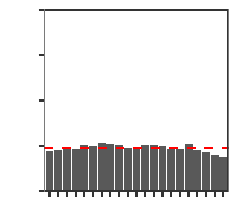}
		\includegraphics[width=\linewidth]{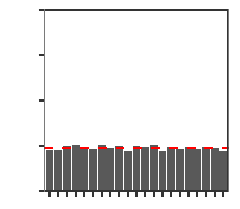}
		\includegraphics[width=\linewidth]{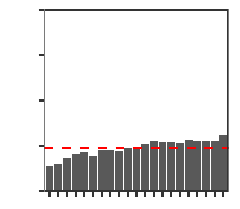}
		\includegraphics[width=\linewidth]{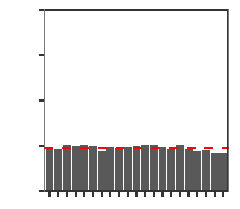}
		\includegraphics[width=\linewidth]{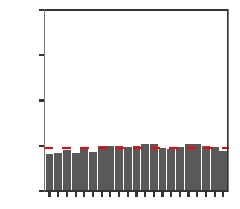}
		\includegraphics[width=\linewidth]{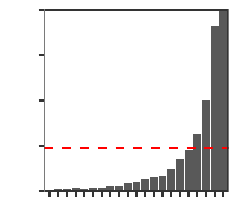}
		\caption{}
	\end{subfigure}
	\begin{subfigure}{0.16\linewidth}
		\includegraphics[width=\linewidth]{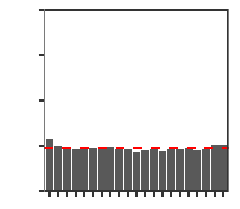}
		\includegraphics[width=\linewidth]{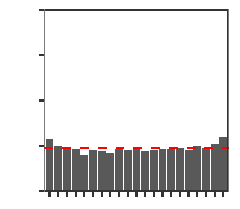}
		\includegraphics[width=\linewidth]{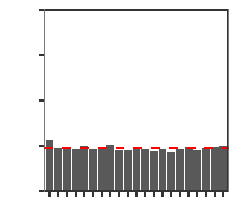}
		\includegraphics[width=\linewidth]{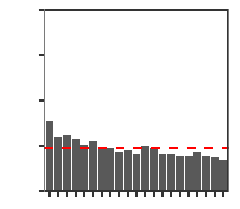}
		\includegraphics[width=\linewidth]{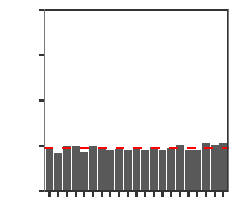}
		\includegraphics[width=\linewidth]{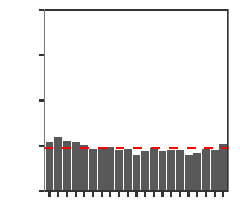}
		\includegraphics[width=\linewidth]{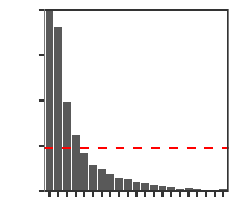}
		\caption{}
	\end{subfigure}
	\caption{Results from the Gaussian random field simulation study. Multivariate rank histograms constructed using the seven pre-rank functions discussed in the text when the variance is (a) under-estimated: $\sigma^{2} = 0.85$, (b) over-estimated: $\sigma^{2} = 1.25$; when the correlations are (c) under-estimated: $\tau = 0.5$, (d) over-estimated: $\tau = 2$; when the degree of isotropy is (e) under-estimated, (f) over-estimated. The histograms were constructed from 10,000 observations and ensemble forecasts.}
	\label{fig:grf_ss_hists}
\end{figure}

\bigskip

\section{Monitoring calibration with e-values}
\label{section:evalues}

To demonstrate the behaviour of the different pre-rank functions, the simulation studies in the previous section consist of forecasts that are independent and identically distributed. The performance of weather forecasts, however, generally varies over time: for example, in different seasons, weather regimes, or due to updates in numerical weather prediction models. Since these conditional biases may cancel each other out such that they are not visible in the rank histograms, it is additionally useful to monitor how forecast calibration evolves over time. 

\bigskip

While several metrics have been proposed to quantify the flatness of a rank histogram \citep[see][]{Wilks2019}, we advocate an approach based on e-values, which simultaneously provide a sequentially valid test for forecast calibration \citep{Arnold2021}. E-values provide a dynamic alternative to p-values when conducting statistical hypothesis tests, and while classical tests require that the evaluation period is fixed in advance, e-values generate statistical tests that are valid sequentially, i.e. under optimal stopping. An e-value for a given null hypothesis is a nonnegative random variable $E$ such that $\E[E] \leq 1$ when the null hypothesis is true. Realisations of $E$ that are less than one therefore support the null hypothesis, whereas larger e-values provide evidence against the null hypothesis. 

\bigskip

When testing for the flatness of a rank histogram, a suitable null hypothesis is that the rank of the observation is uniformly distributed on the set of possible ranks $\{1, \dots, M + 1\}$. \cite{Arnold2021} propose monitoring calibration using the e-values
\begin{equation*}
	E_{t} = (M + 1) p_{A}(R_{t}),
\end{equation*}
where $R_{t}$ is a random variable denoting the rank of the observation among the ensemble members at time $t \ge 1$, and $p_{A}(r)$ is the probability of rank $r \in \{1, \dots, M + 1\}$ occurring under the alternative hypothesis. The alternative distribution $p_{A}$ can be estimated from the ranks observed prior to the current time. This can be done empirically, but if the number of ensemble members is even moderately large, \cite{Arnold2021} suggest modelling $p_{A}$ using a beta-binomial distribution, whose parameters are estimated sequentially using maximum likelihood estimation based on the previously observed ranks.

\bigskip

Essentially, if a rank $r$ has already been observed more often than the expected frequency under the null hypothesis, then $p_{A}(r)$ should be larger than $1/(M + 1)$, resulting in a value $E_{t} > 1$. This recognises that the rank histogram is becoming less similar to a flat histogram, giving evidence against the null hypothesis. Conversely, if $r$ has previously been observed less often than expected, then $E_{t}$ will be smaller than one, recognising that by observing $r$ the rank histogram becomes closer to a flat histogram.

\bigskip

In a sequential setting, if $E_t$  is an e-value conditional on the information available at time $t-1$, for each $t \ge 1$, then the product $e_{t} = \prod_{i=1}^{t} E_{i}$ represents the cumulative evidence for or against the null hypothesis up until time $t$, and is itself an e-value. Theoretical arguments then justify rejecting the null hypothesis at time $t$ and significance level $\alpha$ if $e_{t} \geq 1/\alpha$, and this can be monitored over time without having to fix the sample size in advance. This is equivalent to rejecting the null hypothesis if $1/e_{t} \leq \alpha$; that is, the reciprocal of an e-value constitutes a conservative p-value. 

\bigskip

The calibration of forecasts with a lead time $k$ of one time unit can readily be assessed using this approach. For lead times $k > 1$, the situation is more complicated and we refer to \citet{Arnold2021} for theoretical details. In summary, we can aggregate $k$ sequences of e-values that are separated by lag $k$ using the average product
\begin{equation*}
	e_{t} = \frac{1}{k} \sum_{j=1}^{k} \prod_{i \in \mathcal{K}_{j}(t)} E_{i},
\end{equation*}
where $\mathcal{K}_{j}(t) = \{j + sk : s = 0, 1, \dots, t-1, \hspace{0.2cm} j + sk \leq t \}$ is the set of indices between $j$ and $t$ that are separated by lag $k$. \cite{Arnold2021} demonstrate that if this aggregated sequence is scaled by the constant $(e \log(k))^{-1}$ (with $e = \mathrm{exp(1)}$), then $e_{t}$ will exceed $1/\alpha$ with probability at most $\alpha$. Hence, when testing whether or not a forecast system is calibrated at lead time $k > 1$, the null hypothesis of calibration can be rejected at time $t$ and significance level $\alpha$ if $e_{t} \geq e \log(k)/\alpha$.

\bigskip

By visualising $e_{t}$ as a function of time, it becomes easy to identify periods where evidence against calibration is accumulating. Figure \ref{fig:evalues} presents examples for the forecasts considered in the following section. An increasing e-value indicates a period of mis-calibration, whereas a declining e-value suggests that either the forecasts are calibrated, or the type of mis-calibration has changed. While the e-value does not specify what type of mis-calibration is present in the forecasts, a time series of e-values can be used to identify periods worth analysing in further detail \citep[see][for further details]{Arnold2021}. 

\bigskip

E-values allow us to test whether multivariate forecasts are probabilistically calibrated with respect to a chosen pre-rank function. When several pre-rank functions are used to assess different facets of calibration, one can compute e-values for each pre-rank function, but care must be taken when hypotheses of calibration are rejected since this introduces a multiple testing problem. The simplest approach to account for multiple testing is to apply a Bonferroni correction; that is, instead of rejecting the null hypothesis when the e-value surpasses the threshold $1/\alpha$ or $e \log(k)/\alpha$, for forecasts of lag $1$ and lag $k > 1$ respectively, a hypothesis is only rejected if the e-value surpasses the threshold $\ell/\alpha$ or $\ell e \log(k)/\alpha$, respectively, where $\ell$ is the number of pre-rank functions considered. As with p-values, there are more powerful methods to correct for multiple testing than the Bonferroni correction, and an alternative procedure for e-values is proposed in \cite{Vovk2021}. 

\bigskip

If one aims to construct powerful tests for the global null hypothesis of exchangeability of the multivariate ensemble members and the observation, it appears sensible to use a limited number of pre-rank functions that provide complementary information, and to combine the resulting e-values into one e-value by predictable mixing; we refer readers to \cite{Waudby-SmithRamdas2023} and \cite{CasgrainLarssonETAL2023}, where predictable mixing of test martingales is discussed and is referred to as so-called betting strategies. The same reasoning applies in the case of general multivariate predictive distributions; here, the null hypothesis would be that the forecasts are auto-calibrated, see Appendix. We do not explore this proposal in this paper since we want to emphasise understanding deficiencies in the calibration of multivariate ensemble forecasts, and this cannot be achieved using one global test for calibration. 

\bigskip

\section{Case study}
\label{section:application}

\subsection{Data}
\label{section:data}

Section \ref{section:simstudy} demonstrates how the different pre-rank functions behave when evaluating forecasts in an idealised framework. In this section, we apply the same pre-rank functions to evaluate the calibration of gridded ensemble forecasts for European wind speeds. We consider 10m wind speed forecasts and observations taken from the European Meteorological Network's (EUMETNET) post-processing benchmark dataset \citep[EUPPBench;][]{Demaeyer2023}, which has recently been introduced to provide a ``common platform based on which different post-processing techniques of weather forecasts can be compared.'' The dataset is also canonical when illustrating forecast evaluation methods.

\bigskip

The EUPPBench dataset contains daily forecasts issued by the European Center for Medium-range Weather Forecasts' (ECMWF) Integrated Forecasting System (IFS) during 2017 and 2018. Twenty years of reforecasts are also available for every Monday and Thursday in this two year period, which are generated by the same forecast model but with a smaller number of ensemble members (11 instead of 51). We restrict attention here to the 20 years of reforecasts at a lead time of five days, meaning we work with ensembles of size $M=11$. 

\bigskip

These forecasts are compared to ERA5 reanalyses \citep{Hersbach2020}, which provide a best guess for the observed wind speed fields. The forecasts and observations are on a regular longitude-latitude grid that covers a small domain in central Europe (2.5-10.5E, 45.75-53.5N). The grid has a horizontal resolution of 0.25$^{\circ}$, roughly corresponding to 25 kilometers, and is therefore comprised of 33 distinct longitudes and 32 latitudes. This domain is displayed in Figure \ref{fig:example_fields}. 

\bigskip

When forecasting surface weather variables such as wind speed, the ensemble forecasts issued by numerical weather models are typically subject to systematic biases. To remove these biases, it is common to statistically post-process the numerical model output. Statistical post-processing methods use a historical archive of forecasts and observations to learn and then remove systematic errors in the raw ensemble forecasts. Since the goal of this paper is to illustrate the utility afforded by multivariate rank histograms, we employ a simple, well-known univariate post-processing scheme to re-calibrate the wind speed ensemble forecasts at each grid point separately, and then compare competing approaches to convert these univariately post-processed forecasts into spatially-coherent probabilistic forecast fields. Readers are referred to \cite{Vannitsem2018} for a thorough overview of statistical post-processing methods.

\bigskip

To post-process the IFS ensemble forecasts at each grid point, we employ a standard ensemble model output statistics (EMOS) approach \citep{Gneiting2005}, in which the future wind speed is assumed to follow a truncated logistic distribution \citep{Scheuerer2015b}. The predictive distributions are truncated below at zero, meaning positive probability is assigned only to positive wind speeds. The location parameter of the truncated logistic distribution is a linear function of the ensemble mean forecast at the same time and location, and the scale parameter is similarly a linear function of the ensemble standard deviation. The parameters of this post-processing model are estimated using the first 15 years of reforecasts, and the resulting predictions are then assessed using the remaining 5 years. This results in 3135 forecast and observation fields for model training, and 1045 pairs for verification.

\bigskip

This univariate post-processing method re-calibrates the wind speed forecasts at each individual grid point. However, in doing so, the multivariate dependencies present in the raw ensemble forecast are lost. To obtain forecast distributions that have a realistic multivariate dependence structure, it is common to extract evenly-spaced quantiles from the univariate post-processed distributions, and then reorder these quantiles according to a relevant dependence template. The most popular approach to achieve this is ensemble copula coupling \citep[ECC;][]{Schefzik2013}. ECC is an empirical copula approach that uses the raw ensemble forecasts issued by the numerical weather models as a template to reorder the post-processed forecast distributions. ECC is known to be a straightforward and effective approach to reinstall the dependencies present in the numerical ensemble forecasts, and is frequently implemented in operational post-processing suites.

\bigskip

For comparison, we additionally compare the resulting forecast fields to those obtained using an alternative copula-based reordering scheme, namely the Schaake Shuffle \citep{Clark2004}. Like ECC, the Schaake Shuffle reorders evenly-spaced quantiles from the post-processed forecast distributions according to some dependence template. In contrast to ECC, the Schaake Shuffle uses a random selection of past multivariate observations to construct the dependence template, rather than the raw ensemble forecast. Although the Schaake Shuffle can leverage an arbitrary number of previous observations in this dependence template, only eleven observations are used here, so that the resulting ensemble forecasts have the same number of members as those generated using ECC. Further details regarding these two approaches can be found in \cite{Lakatos2022} and references therein. Several variants of both ECC and the Schaake Shuffle have also recently been proposed, but we restrict attention here to the two most widely-used implementations; \cite{Lakatos2022} find that these extensions do not provide significant benefits. 

\bigskip

The calibration of the forecast fields obtained from these two post-processing methods is compared to the calibration of the raw numerical model output. An example of the forecast fields generated using the three methods is presented in Figure \ref{fig:example_fields}. 

\begin{figure}
	\centering
	\begin{subfigure}{0.24\linewidth}
		\includegraphics[width = \linewidth]{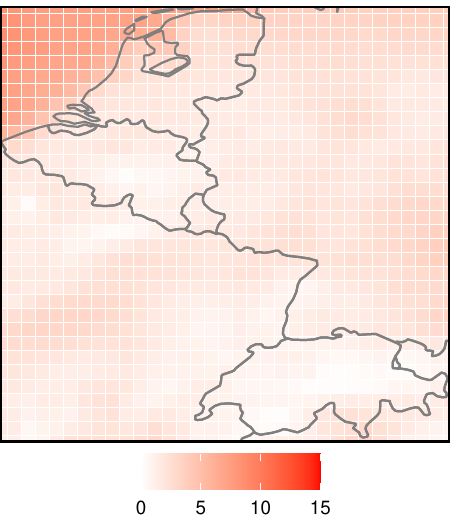}
		\caption{Observation}
	\end{subfigure}
	\begin{subfigure}{0.24\linewidth}
		\includegraphics[width = \linewidth]{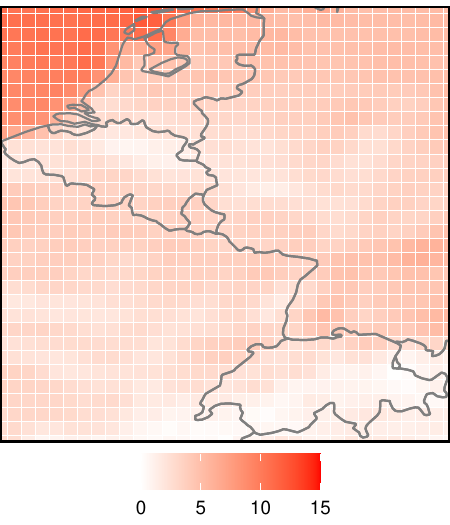}
		\caption{Raw ensemble}
	\end{subfigure}
	\begin{subfigure}{0.24\linewidth}
		\includegraphics[width = \linewidth]{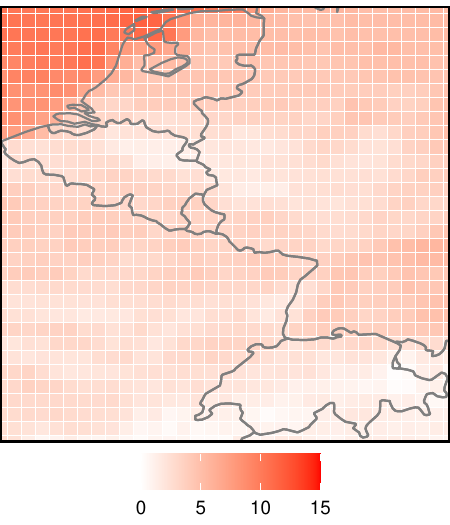}
		\caption{ECC}
	\end{subfigure}
	\begin{subfigure}{0.24\linewidth}
		\includegraphics[width = \linewidth]{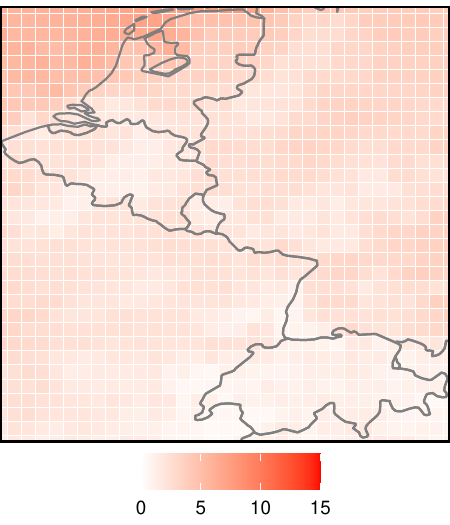}
		\caption{Schaake Shuffle}
	\end{subfigure}
	\caption{Example observation and forecast fields on a randomly selected day.}
	\label{fig:example_fields}
\end{figure}

\bigskip

\subsection{Results}

Firstly, consider the univariate calibration of the competing forecasting methods. Figure \ref{fig:univ_rank_hists} displays the univariate rank histograms corresponding to the raw ensemble forecasts (i.e. the IFS reforecasts) before and after undergoing post-processing. The ranks are aggregated across all 1056 grid points. Although we describe two different multivariate post-processing methods, these differ only in how they reorder the univariate post-processed forecasts, and thus result in the same univariate rank histograms. Figure \ref{fig:univ_rank_hists} illustrates that the raw ensemble forecasts are slightly under-dispersed, with the observed wind speed falling outside the range of the ensemble members more often than would be expected of a calibrated forecast. The simple post-processing approach corrects these dispersion errors.

\begin{figure}[b!]
	\centering
	\includegraphics[width = 0.3\linewidth]{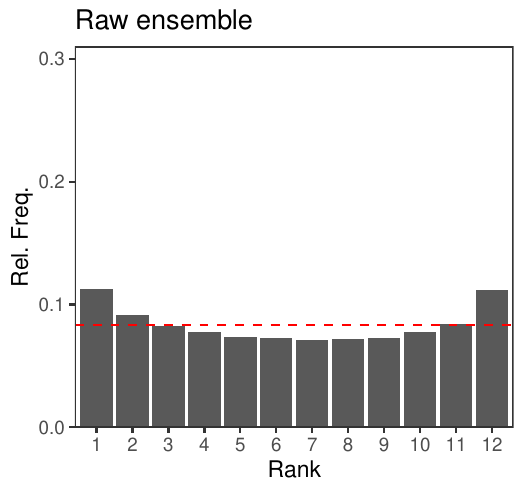}
	\includegraphics[width = 0.3\linewidth]{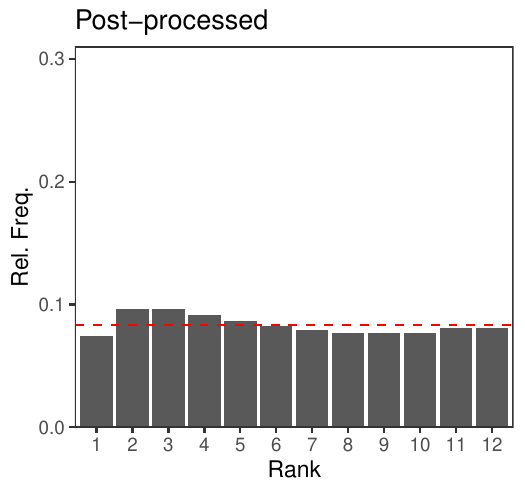}
	\caption{Univariate rank histograms corresponding to the raw ensemble forecasts before and after post-processing. The dashed red lines indicate a uniform histogram.}
	\label{fig:univ_rank_hists}
\end{figure}

\bigskip

Figure \ref{fig:mv_rank_hists} displays the corresponding multivariate rank histograms. The same pre-rank functions are employed as in Section \ref{section:simstudy_grf}. The dependence and isotropy pre-rank functions are implemented with unit lag, and the FTE pre-rank function uses a threshold $t=6$ms$^{-1}$, roughly corresponding to the 90$^{th}$ percentile of the wind speeds in the training data across all grid points. The average and location pre-rank functions suggest that the raw ensemble forecasts are relatively well-calibrated when predicting the mean wind speed over the domain. However, the corresponding band-depth histogram indicates that these forecasts do not reliably capture the centrality of the observation among the ensemble members. The IFS forecast fields also appear to reliably predict the variance of the observed wind speed fields, as well as the number of threshold exceedances, but they severely under-estimate the dependence between adjacent grid points, and over-estimate the measure of isotropy in the ERA5 reanalyses.

\begin{figure}
	\centering
	\begin{subfigure}{0.19\linewidth}
		\includegraphics[width=\linewidth]{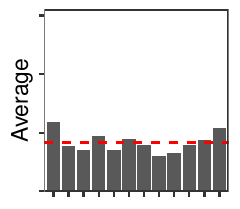}
		\includegraphics[width=\linewidth]{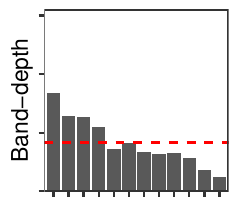}
		\includegraphics[width=\linewidth]{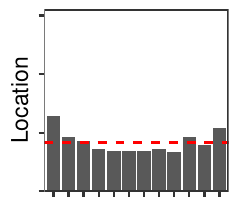}
		\includegraphics[width=\linewidth]{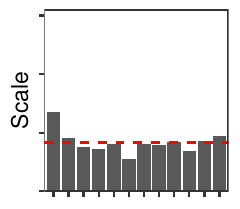}
		\includegraphics[width=\linewidth]{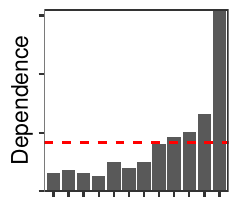}
		\includegraphics[width=\linewidth]{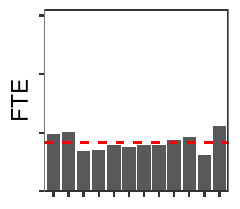}
		\includegraphics[width=\linewidth]{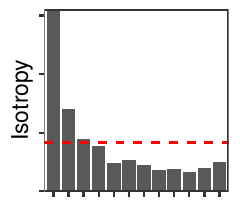}
		\caption{Raw ensemble}
	\end{subfigure}
	\begin{subfigure}{0.19\linewidth}
		\includegraphics[width=\linewidth]{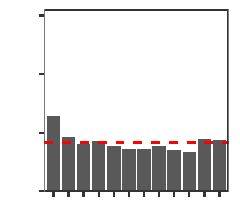}
		\includegraphics[width=\linewidth]{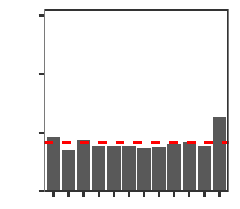}
		\includegraphics[width=\linewidth]{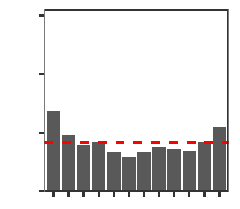}
		\includegraphics[width=\linewidth]{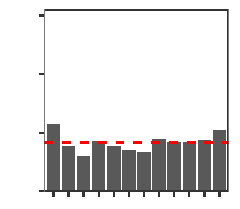}
		\includegraphics[width=\linewidth]{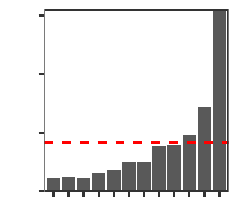}
		\includegraphics[width=\linewidth]{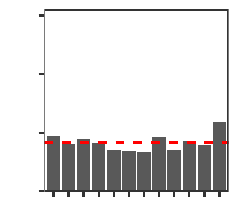}
		\includegraphics[width=\linewidth]{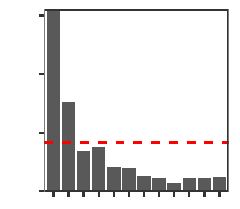}
		\caption{ECC}
	\end{subfigure}
	\begin{subfigure}{0.19\linewidth}
		\includegraphics[width=\linewidth]{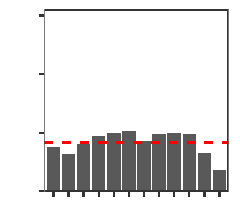}
		\includegraphics[width=\linewidth]{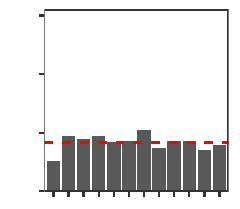}
		\includegraphics[width=\linewidth]{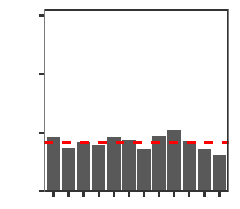}
		\includegraphics[width=\linewidth]{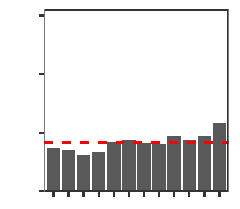}
		\includegraphics[width=\linewidth]{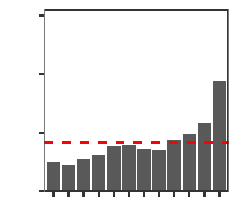}
		\includegraphics[width=\linewidth]{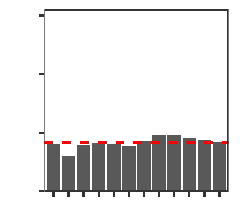}
		\includegraphics[width=\linewidth]{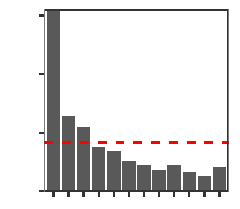}
		\caption{Schaake Shuffle}
	\end{subfigure}
	\caption{Multivariate rank histograms constructed using seven pre-rank functions for (a) the raw ensemble forecasts; the post-processed forecasts reordered using (b) ECC, (c) the Schaake Shuffle. The dashed red lines indicate a uniform histogram.}
	\label{fig:mv_rank_hists}
\end{figure}

\bigskip

Consider now the post-processed forecasts that are reordered according to ECC. The multivariate rank histogram corresponding to the band-depth pre-rank function is much closer to uniform. While this may suggest that these post-processed forecasts improve upon the multivariate calibration of the IFS output, we can use targeted pre-rank functions to identify remaining sources of mis-calibration in these forecasts. In particular, post-processing using EMOS and ECC does not correct the errors in the dependence structure that manifest in the IFS forecast fields, and the resulting forecasts also over-estimate the measure of isotropy at unit lag. Aside from the band-depth pre-rank function, the multivariate rank histograms for the ECC forecasts exhibit the same patterns as those for the raw IFS forecasts. The Schaake Shuffle performs similarly, though while these forecasts also under-estimate the dependence between adjacent grid points, they do so to a lesser degree than IFS and ECC.

\bigskip

These patterns are reinforced by e-values. Figure \ref{fig:evalues} displays the cumulative e-values when assessing the calibration of the forecast systems with respect to three of the chosen pre-rank functions. E-values are displayed for the band-depth, scale, and dependence pre-rank functions. In all cases, a ``burn-in'' period of 100 forecast-observation pairs is required from which to estimate $p_{A}$ when calculating the e-values. The three forecasting methods issue calibrated predictions of the scale of the ERA5 reanalyses fields. For the band-depth pre-rank function, there is quickly sufficient evidence to conclude that the multivariate rank histogram corresponding to the raw ensemble forecasts is not uniform. The post-processing methods, on the other hand, appear probabilistically calibrated with respect to the band-depth pre-rank function. For the dependence pre-rank function, as suggested by the multivariate rank histograms, there is sufficient evidence at the 5\% level to suggest that none of the multivariate forecasts are calibrated. Hence, although post-processing offers improvements upon the raw IFS forecasts, there is still vast potential to improve upon these baseline post-processing methods. Moreover, we expect that the mis-calibration of all forecasting methods would be more pronounced if the forecasts were compared to station observations rather than reanalysis fields.

\bigskip

Since multivariate forecast calibration is assessed using multiple pre-rank functions, one might ask what constitutes a good set of pre-rank functions? A collection of pre-rank functions will be most useful when the individual pre-rank functions provide complementary information. To assess this, it is useful to have a measure of dependence between the rank of $\rho(\mathbf{y})$ for different choices of the pre-rank function $\rho$. If the dependence is strong, then it suggests that one of the pre-rank functions provides redundant information. 

\bigskip

Table \ref{tab:correlation} contains the correlations between the ranks of $\rho(\mathbf{y})$ among $\rho(\mathbf{x}_{1}), \dots, \rho(\mathbf{x}_{M})$ for the various choices of $\rho$. Results are shown for the raw ensemble members, though very similar correlations are obtained for the two post-processing methods. There is a strong positive correlation between the average rank and location pre-rank functions, which both assess the mean behaviour of the spatial fields, and also the FTE pre-rank function. There is also strong positive correlation between the scale, dependence, and FTE pre-rank functions. The location pre-rank function is strongly correlated with the scale pre-rank function in this example, suggesting that errors when predicting the average wind speed over the spatial domain are linked to errors when predicting the variation in wind speeds over the domain. The band-depth and isotropy pre-rank functions, on the other hand, exhibit relatively low correlations with the other pre-rank functions, making them particularly useful in this application. 

\begin{figure}[t]
	\centering
	\includegraphics[width=0.32\linewidth]{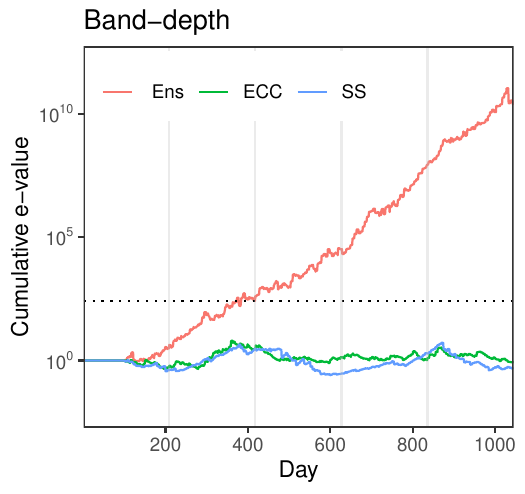}
	\includegraphics[width=0.32\linewidth]{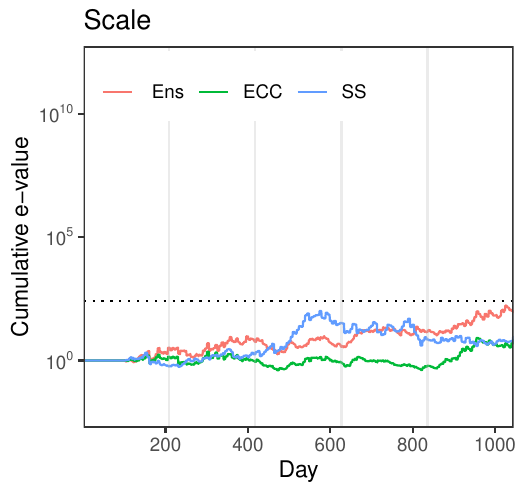}
	\includegraphics[width=0.32\linewidth]{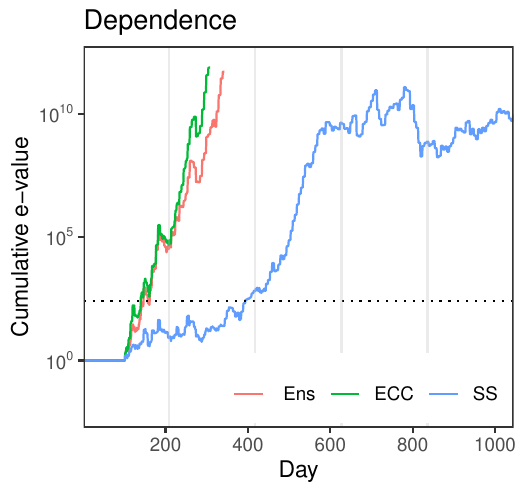}
	\caption{Cumulative e-values for the raw ensemble forecast and the post-processed forecasts when sequentially testing for calibration. Results are shown for the multivariate rank histograms constructed using the band-depth, scale, and dependence pre-rank functions. The null hypothesis of calibration with respect to a pre-rank function is rejected at the 5\% level if the cumulative e-value exceeds the dotted horizontal line at $3 e\log(5)/0.05$. The vertical grey lines divide the test data into five year periods.}
	\label{fig:evalues}
\end{figure}

\begin{table}[b]
	\centering
	\begin{tabular}{ c | c c c c c c c}
		& Average & Band-depth & Location & Scale & Dependence & FTE & Isotropy \\
		\hline
		Average & 1.00 & -0.02 & 0.95 & 0.58 & 0.35 & 0.67 & -0.22 \\
		Band-depth & -0.02 & 1.00 & -0.03 & -0.09 & 0.08 & -0.08 & -0.05 \\
		Location & 0.95 & -0.03 & 1.00 & 0.70 & 0.47 & 0.76 & -0.27 \\
		Scale & 0.58 & -0.09 & 0.70 & 1.00 & 0.73 & 0.83 & -0.35 \\
		Dependence & 0.35 & 0.08 & 0.47 & 0.73 & 1.00 & 0.63 & -0.28 \\
		FTE & 0.67 & -0.08 & 0.76 & 0.83 & 0.63 & 1.00 & -0.33 \\
		Isotropy & -0.22 & -0.05 & -0.27 & -0.35 & -0.28 & -0.33 & 1.00 \\
	\end{tabular}
	\caption{Correlation between the ranks of $\rho(\mathbf{y})$ among the transformed raw ensemble members for the various pre-rank functions.}
	\label{tab:correlation}
\end{table}

\section{Conclusion}
\label{section:conclusion}

For probabilistic forecasting systems to be as useful as possible, the forecasts they issue must be calibrated, in the sense that they are statistically consistent with what actually materialises. In practice, the calibration of univariate ensemble forecasts is typically assessed using rank histograms. Ensemble forecasts for multivariate outcomes can be evaluated using multivariate rank histograms, which use a so-called pre-rank function to transform the observations and ensemble members into univariate objects, prior to constructing a standard rank histogram. In this paper, we highlight that there is considerable flexibility in the choice of pre-rank function, meaning practitioners can choose pre-rank functions on a case-by-case basis, depending on what information they wish to extract from their forecasts. In particular, while previously proposed pre-rank functions depend on the forecast and observed outcome, we argue that this need not be the case. Instead, simple pre-rank functions can be employed that more directly target specific aspects of the multivariate forecasts. 

\bigskip

We introduce generic pre-rank functions that can focus attention to the location, scale, and dependence structure of the multivariate forecast distributions, allowing each of these components to be assessed individually. The resulting histograms are straighforward to interpret, making it easier to assess what systematic deficiencies exist in the forecasts. We also propose suitable pre-rank functions to assess the calibration of probabilistic spatial field forecasts, which are regularly issued by operational weather forecasting centres. Although we focus here on spatial forecast fields as an example, the arguments presented herein apply to other multivariate forecasts, and future work could consider relevant pre-rank functions in other multivariate forecasting settings, such as time series forecasting.

\bigskip

A long-standing challenge when evaluating spatial field forecasts is to design verification tools that value realistic-looking forecast fields, i.e. fields that do not violate any physical laws. If we could quantify how realistic a weather field is, then these measures of realism could be used as pre-rank functions in the framework discussed herein. Doing so would not only reveal whether our forecast fields are realistic, but, if not, the corresponding multivariate rank histograms should additionally allow us to identify how our forecasts are unrealistic.

\bigskip

The pre-rank functions we introduced are used to compare ensemble fields obtained from a near-operational ensemble prediction system before and after having undergone statistical post-processing. These results help to understand not only how the post-processed forecasts improve upon the raw model output, but also what deficiencies these forecasts still exhibit. In recognising the limitations of the predictions, it becomes easier to remove them, resulting in more accurate and reliable forecasts in the future. While the multivariate post-processing frameworks employed in this study are commonly applied in operational post-processing suites, the multivariate rank histograms in Figure \ref{fig:mv_rank_hists} suggest that these forecast still exhibit significant biases, particularly related to the dependence between the wind speed at nearby grid points. An interesting avenue for future work would therefore be to compare these results to those obtained using state-of-the-art machine learning models that have recently been introduced for multivariate post-processing \citep[e.g.][]{Dai2021,Chen2022}. 

\bigskip

The general framework outlined herein involves identifying univariate characteristics of multivariate objects, and evaluating the multivariate forecasts via their ability to predict these characteristics. This approach has also been proposed when constructing multivariate scoring rules \citep[see e.g.][]{Scheuerer2015, Weniger2017, Heinrich2021}. \cite{Allen2022} illustrate that this framework can essentially be interpreted as a weighting of the scoring rule, where the transformation determines which outcomes are to be emphasised when calculating forecast accuracy. Weighted versions of these multivariate rank histograms could also be employed to target particular multivariate outcomes when assessing forecast calibration, as outlined by \cite{Allen2023}. Note, however, that the transformations used within scoring rules need not be interpretable, unlike the pre-rank function used to construct multivariate rank histograms, and hence some transformations that are suitable when calculating forecast accuracy may not be when interest is on forecast calibration. 



\bigskip

\section*{Acknowledgements}

The authors are grateful to the Swiss Federal Office of Meteorology and Climatology (MeteoSwiss) and the Oeschger Centre for Climate Change Research for financially supporting this work. Sebastian Arnold, Jonas Bhend, Alexander Henzi, and Marc-Oliver Pohle are also thanked for fruitful discussions during the preparation of this manuscript.

\bibliographystyle{apalike}
\bibliography{references}

\section*{Appendix}

\subsection*{Continuous probabilistic forecast distributions}

Suppose we are interested in forecasting a univariate random variable $Y$, and let $F$ be a forecast for $Y$ in the form of a cumulative distribution function. The forecast is said to be probabilistically calibrated if its probability integral transform (PIT)
\begin{equation*}
	Z_{F} = F(Y-) + V[F(Y) - F(Y-)] 
\end{equation*}
follows a standard uniform distribution, where $V$ is a standard uniform random variable that is independent from $Y$ and $F$, and $F(Y-) = \mathrm{lim}_{x \uparrow Y} F(x)$ \citep{Diebold1998,Gneiting2013}. The motivation behind this definition is the result that $Y \sim F$ if and only if $Z_{F}$ follows a standard uniform distribution. The probabilistic calibration of a forecast system (rather than a single forecast) can be assessed by treating the forecast $F$ as random.

\bigskip

A stronger notion of calibration for a forecast system is auto-calibration, $\mathcal{L}(Y \mid F) = F$, where $\mathcal{L}$ denotes the law, or distribution. That is, the conditional distribution of $Y$ given the (random) forecast $F$ is equal to $F$. Auto-calibration is a strictly stronger requirement than probabilistic calibration \citep{Gneiting2013,Gneiting2021}. However, in contrast to probabilistic calibration, auto-calibration generalises without any adaptations to multivariate predictions and outcomes.

\bigskip

In practice, the probabilistic calibration of a forecast system can be assessed by calculating the PIT values corresponding to a sequence of forecasts and observations. These PIT values can then be displayed in a histogram to check whether or not they resemble a sample from a standard uniform distribution. These PIT histograms generalise rank histograms, and their interpretation is analogous to the interpretation of rank histograms discussed in Section \ref{section:histograms}. 

\bigskip

Suppose now that we wish to predict a multivariate random variable $\mathbf{Y}$ that takes values in $\R^{d}$, for $d>1$. Given a simple pre-rank function $\rho: \R^{d} \to \R$ and a multivariate forecast distribution $F$, $F$ is said to be probabilistically calibrated with respect to the pre-rank function $\rho$ if 
\begin{equation}\label{eq:mvPIT}
	Z_{F_{\rho}} = F_{\rho}(\rho(\mathbf{Y})-) + V[F_{\rho}(\rho(\mathbf{Y})) - F_{\rho}(\rho(\mathbf{Y})-)] 
\end{equation}
follows a standard uniform distribution, where $V$ is a standard uniform random variable, and $F_{\rho}$ is the univariate cumulative distribution function induced by $F$ and $\rho$, i.e. $F_{\rho}(x) = \PP(\rho(\mathbf{X}) \leq x)$ for $\mathbf{X} \sim F$ and $x \in \R$. This definition of multivariate calibration is equivalent to assessing whether the forecast distribution is probabilistically calibrated when predicting the univariate random variable $\rho(\mathbf{Y})$.
The following simple but powerful result gives a theoretical justification for considering probabilistic calibration with respect to pre-rank functions.
\begin{theorem}
	Suppose that the multivariate forecast distribution $F$ is auto-calibrated for $\mathbf{Y} \in \R^d$. Then $F$ is probabilistically calibrated with respect to any simple pre-rank function $\rho:\R^d \to \R$. 
\end{theorem}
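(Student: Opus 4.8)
The plan is to reduce the multivariate statement to the univariate probability-integral-transform (PIT) characterisation recalled in the Appendix --- namely that a real-valued outcome $Z$ with $Z \sim H$ produces a standard uniform PIT $H(Z-) + V[H(Z) - H(Z-)]$ --- and to apply this characterisation conditionally on the random forecast $F$. As the text already notes, probabilistic calibration with respect to $\rho$ is by definition nothing other than univariate probabilistic calibration of the induced forecast $F_{\rho}$ for the univariate outcome $\rho(\mathbf{Y})$, so it suffices to verify that $Z_{F_{\rho}}$ in Equation \ref{eq:mvPIT} is standard uniform.

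First I would condition on the random forecast $F$. Auto-calibration states $\mathcal{L}(\mathbf{Y} \mid F) = F$, i.e. conditionally on $F$ the outcome $\mathbf{Y}$ is distributed according to $F$ itself. Applying the (measurable) map $\rho$ then shows that the conditional law of $\rho(\mathbf{Y})$ given $F$ is exactly the push-forward of $F$ under $\rho$, which is precisely the induced univariate distribution $F_{\rho}$ defined by $F_{\rho}(x) = \PP(\rho(\mathbf{X}) \leq x)$ for $\mathbf{X} \sim F$. In other words, conditionally on $F$ we have $\rho(\mathbf{Y}) \sim F_{\rho}$, with $F_{\rho}$ the cumulative distribution function determined by this same realisation of $F$.

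Next I would invoke the univariate PIT result conditionally. Since $V$ is standard uniform and independent of $(\mathbf{Y}, F)$, it remains an independent standard uniform after conditioning on $F$; hence, conditionally on $F$, the quantity $Z_{F_{\rho}}$ is the PIT of $\rho(\mathbf{Y}) \sim F_{\rho}$ and is therefore standard uniform by the univariate characterisation. Crucially, this conditional distribution is standard uniform for (almost) every realisation of $F$ and does not depend on $F$, so by the tower property the marginal distribution of $Z_{F_{\rho}}$ is also standard uniform. This establishes probabilistic calibration with respect to $\rho$.

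The step I expect to carry the real content is the conditional application of the univariate characterisation: one must be careful that the cumulative distribution function $F_{\rho}$ appearing in $Z_{F_{\rho}}$ is itself a function of the random $F$, so the univariate result cannot be applied marginally but only after conditioning on $F$; and one must check both that the independence of $V$ survives this conditioning and that $\rho$ is measurable enough for the push-forward identity $\mathcal{L}(\rho(\mathbf{Y}) \mid F) = F_{\rho}$ to hold. Once these points are in place, the remainder is the routine integration over $F$ via the law of total probability.
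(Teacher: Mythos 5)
Your proof is correct and follows essentially the same route as the paper's: condition on the random forecast $F$, use auto-calibration (pushed forward through $\rho$) together with the univariate PIT characterisation to conclude that $Z_{F_\rho}$ is conditionally standard uniform, and then integrate out $F$ via the tower property. The paper's proof is just a compressed version of this argument, writing $\PP(Z_{F_\rho} \le x) = \E\big[\E[\one\{Z_{F_\rho} \le x\} \mid F]\big] = \E[x] = x$, with the conditional steps you spell out left implicit.
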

\begin{proof}
	Let $x \in [0,1]$. Then
	\[
	\PP(Z_{F_\rho} \le x) = \mathbb{E}\Big[\mathbb{E}\big[\one\{Z_{F_\rho} \le x\} \mid F \big]\Big] 
	= \mathbb{E}[x] = x,
	\]
	where we used auto-calibration for the second equality.
\end{proof}

\bigskip

Multivariate PIT histograms can analogously be defined to assess the calibration of continuous multivariate forecast distributions. The pre-rank functions listed in Section \ref{section:existing_pre-ranks} can equally be applied to continuous multivariate forecast distributions but the distribution $F_\rho$ may have to be approximated by simulation. 

\bigskip

\cite{Ziegel2017} demonstrates how pre-rank functions for ensemble forecasts depending on $M+1$ arguments (instead of just the first argument) can be generalised to the case where the forecast is an arbitrary predictive distribution over $\R^{d}$. We do not detail the arguments here since we advocate simple pre-rank functions depending on the first argument only.

\end{document}